\numberwithin{equation}{section}
\newtheorem{theorem}{Theorem}
\newtheorem{corollary}{Corollary}
\newtheorem{lemma}{Lemma}
\theoremstyle{definition}
\newtheorem{assumption}{Assumption}
\newtheorem{example}{Example}
\newtheorem{remark}{Remark}
\renewcommand{\hat}{\widehat}
\newcommand{\calF}[0]{\mathcal{F}}
\newcommand{\sumi}{\sum_{i=1}^N }
\newcommand{\sumt}{\sum_{t=1}^T }
\renewcommand{\hat}{\widehat}
\begin{document}
\title{On Using The Two-Way Cluster-Robust Standard Errors}
\author{
	Harold D. Chiang\thanks{\setlength{\baselineskip}{4.0mm}Harold D. Chiang: hdchiang@wisc.edu. Department of Economics, University of Wisconsin-Madison, William H. Sewell Social Science Building 1180 Observatory Drive 
		Madison, WI 53706-1393, USA\bigskip} 
\and
	Yuya Sasaki\thanks{\setlength{\baselineskip}{4.0mm}Yuya Sasaki: yuya.sasaki@vanderbilt.edu. Department of Economics, Vanderbilt University, VU Station B \#351819, 2301 Vanderbilt Place, Nashville, TN 37235-1819, USA\bigskip}
}

\date{}
\maketitle

\begin{abstract}
Thousands of papers have reported two-way cluster-robust (TWCR) standard errors.
However, the recent econometrics literature points out the potential non-gaussianity of two-way cluster sample means, and thus invalidity of the inference based on the TWCR standard errors.
Fortunately, simulation studies nonetheless show that the gaussianity is rather common than exceptional.
This paper provides theoretical support for this encouraging observation. Specifically, we derive a novel central limit theorem for two-way clustered triangular arrays that justifies the use of the TWCR under very mild and interpretable conditions.
We, therefore, hope that this paper will provide a theoretical justification for the legitimacy of most, if not all, of the thousands of those empirical papers that have used the TWCR standard errors.
We provide a guide in practice as to when a researcher can employ the TWCR standard errors.
\vspace{0.25cm}
\\
{\bf Keywords:} asymptotic gaussianity, two-way clustering, triangular arrays, central limit theorem
\\
\end{abstract}

\section{Introduction}

Multi-way clustering is ubiquitous in empirical studies.
For example, market structures by construction induce two-way clustering, where common supply shocks cause cluster dependence within a firm across markets and common demand shocks cause cluster dependence within a market across firms.
To account for such forms of cluster dependence, researchers often use the two-way cluster-robust (TWCR) standard errors proposed by \citet{CGM2011} and \citet{thompson2011simple}.

A key to the inference based on the TWCR standard errors of \citet{CGM2011} and \citet{thompson2011simple} is the asymptotic gaussianity.
However, the recent econometrics literature \citep{menzel2021bootstrap} has pointed out the potential non-gaussianity of the limit distribution.
In this light, this literature proposes alternative inference procedures that do not rely on asymptotic gaussianity.

With this said, a large number of empirical papers have already reported their TWCR standard errors.
Specifically, \citet{CGM2011} and \citet{thompson2011simple} have attracted 3,500 citations and 1,600 citations, respectively,\footnote{We obtained these numbers from Google Scholar in December 2022.} where most of these papers are empirical research papers that actually use their TWCR standard errors.
If the non-gaussianity were indeed a common feature, then the whole body of this empirical economics literature would require re-investigation.

A natural question is, therefore, whether the asymptotic gaussianity of the statistics commonly occurs under two-way clustering.
Some simulation studies will easily convince us that it is fairly common, and non-gaussianity is rather exceptional.
This observation is encouraging and supports the common practice of two-way cluster-robust inference based on the TWCR standard errors.

In this paper, we provide a theoretical justification for this encouraging observation.
By taking the asymptotics through the lens of triangular arrays, we show that the gaussian limit distribution is the norm rather than an outlier. 
Specifically, we show that two-way clustered triangular arrays are guaranteed to have asymptotically gaussian limiting distributions under very mild conditions. 
What these conditions concern shares a natural resemblance to the number of factors in factor models, and these conditions are therefore easily interpretable from the viewpoint of economic models.
Concretely, our theory suggests that a researcher should worry about the potential non-gaussianity only in those peculiar situations in which the data-generating model takes the form of a sum of a very small number of mean-zero two-way interactive factors.
In other words, a researcher can in fact enjoy the TWCR standard errors in most situations.
We, therefore, hope that this paper provides a theoretical justification for the legitimacy of most, if not all, of the thousands of those empirical papers that use the TWCR standard errors of \citet{CGM2011} and \citet{thompson2011simple}  and shed some lights on the empirical practice of TWCR for future empirical papers to come as well.

{\bf Relation to the Literature:}
Our way of viewing the data generating processes and asymptotics in terms of triangular arrays and exploiting the asymptotically gaussian degenerate one-sample $U$-statistic structure is closely related to the literature of specification testing \citep{hong1995consistent,fan1996consistent,kankanala2022kernel}, many weak instruments \citep{andrews2007testing,newey2009generalized}, small-bandwidth asymptotics \citep{cattaneo2014small}, regression discontinuity designs \citep{porter2015regression}, network formation models \citep{graham2017econometric}, many regressors \citep{cattaneo2018alternative}, and algorithmic subsampling \citep{lee2022least}, to list but a few. Notably, non-gaussianity can be safely ruled out in the corresponding degenerate $U$-statistics in all these applications. Unlike these existing papers, however, statistics based on two-way clustered triangular arrays do not take the form of a one-sample degenerate $U$-statistic structure -- it rather takes the form of a two-sample $U$-statistics with unknown kernel and unobserved underlying random variables. This key difference renders the proof strategies in the existing literature inapplicable, and thus motivates us to develop our own new theoretical results. In terms of the setup, our paper is built upon the literature that utilizes exchangeable models for network or two-way dependence considered in \cite{bickel2011method}, \cite{menzel2021bootstrap} and \cite{DDG2019}, to list a few. Other alternative models for asymptotics under this type of dependence structures exist and are studied in, for example, \cite{tabord2019inference} and \cite{verdier2020estimation}.

Our main result, a central limit theorem (CLT) for means of two-way clustered triangular arrays, is related to those CLT results for various degenerate $U$-statistics, such as \cite{hall1984central}, \cite{de1987central}, and \cite{eubank1999central} that are based on martingale structures. However, due to the two-way clustering, their martingale construction does not work in our setting. It is also related to the CLT in \cite{khashimov1989limit}, which is a special case of our CLT, albeit it is shown to rely on a non-martingale-based proof strategy. As the Hoeffding-type decomposition of the two-way clustered mean contains extra components in comparison with the degenerate two-sample $U$-statistics, it remains unclear whether the proof strategy of Khashimov, which relies on approximating characteristic functions directly, can be readily adapted to cover our case. Therefore, we instead take a martingale-based approach to derive our own CLT.

\section{When Can We Use The TWCR Standard Errors?}\label{sec:when}

We first provide an informal overview of the practical implications of our main result in Section \ref{sec:main}.

Two-way clustered data $\{D_{it}: 1 \le i \le N, 1 \le t \le T\}$ are generated by $i$-specific factors, $j$-specific factors, and idiosyncratic components.
To fix ideas, consider the simple yet generic data generating process (DGP)
\begin{align}\label{eq:when}
D_{it} = \alpha_{i0} + \gamma_{t0} + \sum_{j=1}^J \lambda_j \alpha_{ij} \gamma_{tj} + \varepsilon_{ij},
\end{align}
where
$\{\alpha_{ij}\}_{j=0}^J$ are $i$-specific latent factors,
$\{\gamma_{tj}\}_{j=0}^J$ are $t$-specific latent factors, and
$\varepsilon_{it}$ is an idiosyncratic component. 
The reason that this DGP is highly representative will be made clear in Section \ref{sec:main}. 
Suppose that the factor loadings $\lambda_j$ are non-zero, and $\alpha_{ij}$ and $\gamma_{tj}$ are zero-mean non-degenerate factors for $j \in 1,\cdots,J$.
In this simple setup \eqref{eq:when}, Table \ref{tab:when} summarizes the cases in which a researcher can and cannot use the TWCR standard error for $\hat\theta = (NT)^{-1} \sum_{i=1}^N\sum_{t=1}^T D_{it}$.

\begin{table}[h]
\vspace{0.5cm}
\centering
\renewcommand{\arraystretch}{0.545}
\begin{tabular}{ccccc}
\hline\hline
\multicolumn{4}{c}{DGP $=$ Simple Model \eqref{eq:when}} & Can We Use The\\
\cline{1-4}
$J$ Is Small & $\alpha_{i0}$ & $\gamma_{t0}$ & $\varepsilon_{it}$ & TWCR Standard Error?\\
\hline
True & Degenerate & Degenerate & Degenerate & No\\
True & Non-Degenerate & Degenerate & Degenerate & Yes \\
True & Degenerate & Non-Degenerate & Degenerate & Yes \\
True & Degenerate & Degenerate & Non-Degenerate & Yes \\
True & Non-Degenerate & Non-Degenerate & Degenerate & Yes \\
True & Degenerate & Non-Degenerate & Non-Degenerate & Yes \\
True & Non-Degenerate & Degenerate & Non-Degenerate & Yes \\
True & Non-Degenerate & Non-Degenerate & Non-Degenerate & Yes \\
False & Degenerate & Degenerate & Degenerate & Yes \\
False & Non-Degenerate & Degenerate & Degenerate & Yes \\
False & Degenerate & Non-Degenerate & Degenerate & Yes \\
False & Degenerate & Degenerate & Non-Degenerate & Yes \\
False & Non-Degenerate & Non-Degenerate & Degenerate & Yes \\
False & Degenerate & Non-Degenerate & Non-Degenerate & Yes \\
False & Non-Degenerate & Degenerate & Non-Degenerate & Yes \\
False & Non-Degenerate & Non-Degenerate & Non-Degenerate & Yes \\
\hline\hline
\end{tabular}
\caption{A summary of the cases in which a researcher can and cannot use the TWCR standard error.}\label{tab:when}
\vspace{0.5cm}
\end{table}

This table suggests that a researcher may use the TWCR standard error in most of cases.
The only pathetic situation is when the data generating model is too simple in the sense that the number $J$ of interacting economic factors is small \textit{and all of} additive latent factors, $\alpha_{i0}$, $\gamma_{t0}$, and $\varepsilon_{it}$, are degenerate as in the first row in Table \ref{tab:when}.
Section \ref{sec:main} presents a formal theoretical justification for this practical guidance.
Simulation studies in Section \ref{sec:simulations} illustrate how large $J$ should be in practice.

\section{The Main Result}\label{sec:main}

For each $N,T\in \mathbb N$, let
\begin{align*}
D_{it}=f_{NT}(\alpha_i,\gamma_t,\varepsilon_{it}),
\end{align*}
where $(\alpha_i)_i$, $(\gamma_t)_t$, and $(\varepsilon_{it})_{it}$ are mutually independent i.i.d. latent Borel-random variables, and $f_{NT}$ is a real-valued Borel-measurable function. The existence of this nonlinear factor-type structure is implied by a symmetry condition known as "separate exchangeability," see the discussions in \cite{menzel2021bootstrap} and \cite{DDG2019}. 
For ease of writing and without loss of generality, we normalize the location to $E[D_{it}]=0$.
Define
\begin{align*}
\hat\theta_{NT} = \frac{1}{NT}\sum_{i=1}^N\sum_{t=1}^T D_{it}.
\end{align*}
Our goal is to show the asymptotic gaussianity of $\hat \theta_{NT}$.

Note that
the Hoeffding-type decomposition yields 
\begin{align}
\hat\theta_{NT} =&\underbrace{\sum_{i=1}^N a_i +\sum_{t=1}^T b_{t}}_{=:L_{NT}} +\underbrace{\sum_{i=1}^N \sum_{t=1}^T w_{it}}_{=:W_{NT}} +\underbrace{\sum_{i=1}^N \sum_{t=1}^T r_{it}}_{=:R_{NT}}, \text{ where }\label{eq:Hoeffding}\\
a_i=&N^{-1}E[D_{it}|\alpha_i],\quad
b_t=T^{-1}E[D_{it}|\gamma_t],\nonumber\\
w_{it}=&(NT)^{-1}\{E[D_{it}|\alpha_i,\gamma_t]-E[D_{it}|\alpha_i]-E[D_{it}|\gamma_t]\}, \text{ and }\nonumber\\
r_{it}=&(NT)^{-1}\{D_{it}-E[D_{it}|\alpha_i,\gamma_t]\}.\nonumber
\end{align}
One can easily verify the following properties.
\begin{align*}
&E[a_i]=E[b_t]=E[w_{it}]=E[r_{it}]=0,\\
&E[w_{it}|\alpha_i]=E[w_{it}|\gamma_t]=0, \text{ and }\\
&E[a_i w_{it}]=E[a_i r_{it}]=E[\gamma_t w_{it}]=E[\gamma_t r_{it}]=0.
\end{align*}

The $W_{NT}$ component in the decomposition \eqref{eq:Hoeffding} is the potentially non-gaussian part.
Specifically, it is a completely degenerate two-sample $U$-statistic, and has a non-gaussian limit distribution if $f_{NT}$ is fixed over $N,T$ -- see \cite{menzel2021bootstrap}.
We are going to argue that even this potentially non-gaussian component can be, and often is, asymptotically gaussian if we treat the data generating process as a triangular array.
Hence, the whole $\hat\theta_{NT}$ in \eqref{eq:Hoeffding} is gaussian as well in this framework under some mild extra conditions.

We will write $a_{NT}(\alpha_i)=a_i$, $b_{NT}(\gamma_t)=b_t$, and  $w_{NT}(\alpha_i,\gamma_t)=w_{it}$, when we want to emphasize the fact that they are transformations of the underlying latent random variables. 
The following lemma follows directly from  \cite{eagleson1979orthogonal}.

\begin{lemma}[Orthonormal Representation of Degenerate Two-Sample U-Statistics]\label{orthonormal_representation}
	If $E[w_{NT}(\alpha_i,\gamma_t)^2]$ $<\infty$, then there exist complete orthonormal systems of square-integrable basis functions $\phi_{NT0}(\alpha)=1$, $\phi_{NT1}(\alpha)$,$\cdots$, and $l_{NT0}(\gamma)=1$, $l_{NT1}(\gamma)$,$\cdots$, such that
	\begin{align*}
	w_{it}=\sum_{j=0}^\infty \lambda_{NTj} \phi_{NTj}(\alpha_i)l_{NTj}(\gamma_t),\quad \sum_{j=0}^\infty \lambda_{NTj}^2<\infty.
	\end{align*}
	Furthermore,
	\begin{align*}
	E[\phi_{NTj}(\alpha_i) w_{NT}(\alpha_i,\cdot)]=\lambda_{NTj}l_{NTj}(\cdot),\quad E[l_{NTj}(\gamma_t)w_{NT}(\cdot,\gamma_t) ]=\lambda_{NTj}\phi_{NTj}(\cdot), j=1,2,\cdots
	\end{align*}
\end{lemma}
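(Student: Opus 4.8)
The plan is to read the finite-second-moment hypothesis as a Hilbert--Schmidt condition and then invoke the singular value (Schmidt) decomposition of the associated integral operator, which is precisely the orthogonal-expansion result of \cite{eagleson1979orthogonal}. Let $P_\alpha$ and $P_\gamma$ denote the laws of $\alpha_i$ and $\gamma_t$. Since these are Borel random variables, $L^2(P_\alpha)$ and $L^2(P_\gamma)$ are separable Hilbert spaces, and the assumption $E[w_{NT}(\alpha_i,\gamma_t)^2]<\infty$ says exactly that $w_{NT}\in L^2(P_\alpha\otimes P_\gamma)$. I would therefore introduce the integral operator $A:L^2(P_\gamma)\to L^2(P_\alpha)$ defined by $(Ag)(\cdot)=E[w_{NT}(\cdot,\gamma_t)\,g(\gamma_t)]$, whose adjoint is $(A^*h)(\cdot)=E[w_{NT}(\alpha_i,\cdot)\,h(\alpha_i)]$. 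Its Hilbert--Schmidt norm equals $\big(E[w_{NT}^2]\big)^{1/2}<\infty$, so $A$ is Hilbert--Schmidt and hence compact.

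First I would apply the singular value decomposition for compact operators. This yields orthonormal families $\{\phi_{NTj}\}_{j\ge1}\subset L^2(P_\alpha)$ and $\{l_{NTj}\}_{j\ge1}\subset L^2(P_\gamma)$ together with singular values $\lambda_{NTj}\ge0$ satisfying $A\,l_{NTj}=\lambda_{NTj}\phi_{NTj}$ and $A^*\phi_{NTj}=\lambda_{NTj}l_{NTj}$, with $\sum_{j}\lambda_{NTj}^2=\|A\|_{\mathrm{HS}}^2=E[w_{NT}^2]<\infty$ and the kernel recovered as $w_{NT}=\sum_{j\ge1}\lambda_{NTj}\,\phi_{NTj}\otimes l_{NTj}$ in $L^2(P_\alpha\otimes P_\gamma)$. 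Writing the two operator identities out as expectations gives $E[l_{NTj}(\gamma_t)\,w_{NT}(\cdot,\gamma_t)]=\lambda_{NTj}\phi_{NTj}(\cdot)$ and $E[\phi_{NTj}(\alpha_i)\,w_{NT}(\alpha_i,\cdot)]=\lambda_{NTj}l_{NTj}(\cdot)$, which are the asserted eigenfunction relations.

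It remains to install the constant function as the $j=0$ term and to make the two families complete. Here I would use the degeneracy of $w_{NT}$: the identities $E[w_{NT}\mid\alpha_i]=0$ and $E[w_{NT}\mid\gamma_t]=0$ say precisely that $A\,1=0$ and $A^*1=0$, i.e. the constant $1$ lies in $\ker A$ and in $\ker A^*$. Because every singular function with $\lambda_{NTj}>0$ belongs to $(\ker A^*)^\perp=\overline{\mathrm{range}(A)}$ (respectively to $(\ker A)^\perp$), it is orthogonal to $1$; hence $E[\phi_{NTj}(\alpha_i)]=E[l_{NTj}(\gamma_t)]=0$ for every active $j$. I may therefore set $\phi_{NT0}=l_{NT0}=1$ with $\lambda_{NT0}=0$ and, using separability, adjoin further zero-singular-value functions to extend $\{\phi_{NTj}\}_{j\ge0}$ and $\{l_{NTj}\}_{j\ge0}$ to complete orthonormal systems; since each appended term carries $\lambda=0$, the bilinear series is unchanged and may be written from $j=0$.

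The decomposition itself is textbook operator theory, so the only genuinely delicate bookkeeping is the interplay between the $L^2$ (rather than pointwise) sense of the expansion and the completion step -- namely verifying that the constant and any other padding functions can be assigned zero singular value without disturbing the series. This is exactly the content that \cite{eagleson1979orthogonal} packages, which is why the lemma follows directly from that reference.
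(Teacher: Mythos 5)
Your proposal is correct and takes essentially the same route as the paper, which offers no proof of its own beyond stating that the lemma ``follows directly from \cite{eagleson1979orthogonal}'': your Hilbert--Schmidt/singular-value-decomposition argument is exactly the content that reference packages, and you correctly supply the one non-textbook step, namely using the degeneracy conditions $E[w_{NT}\mid\alpha_1]=0$ and $E[w_{NT}\mid\gamma_1]=0$ to make $(1,1)$ a zero-singular-value pair so the constants can head both families, which are then padded to complete orthonormal systems without altering the $L^2$ expansion. No gaps.
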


\noindent
By the orthonormality of the system,
\begin{align*}
	&\lambda_{NT0}=E[w_{NT}(\alpha_i,\gamma_t)]=0,
	&&E[\phi_{NTj}(\alpha_i)]=E[l_{NTj}(\gamma_t)]=0, \\
	&E[\phi_{NTj}^2(\alpha_i)]=E[l_{NTj}^2(\gamma_t)]=1,
	&&E[\phi_{NTj}(\alpha_i) l_{NTj}(\gamma_t)]=0
\end{align*}
hold for all $j=1,2,\cdots$.

Lemma \ref{orthonormal_representation} and Equation (\ref{eq:Hoeffding}) together imply that the DGP introduced in Equation (\ref{eq:when}) in Section \ref{sec:when} for an informal overview is indeed highly representative.

We impose the following conditions. Let us follow the mathematical convention $0/0=0$.

\begin{assumption}\label{a:deg_U-stat}
	(i) 
	$(\alpha_i)_{i\in \mathbb N}$ and 	$(\gamma_t)_{t\in \mathbb N}$ are i.i.d. Borel-measurable random vectors that are at most countable dimensional, and $(\alpha_i)_{i\in \mathbb N}\perp (\gamma_t)_{t\in \mathbb N}$. 
	(ii)
	$E[w_{NT}(\alpha_1,\gamma_1)]=0$, $E[w_{NT}^2(\alpha_1,\gamma_1)]<\infty$, $E[w_{NT}(\alpha_1,\gamma_1)|\alpha_1]=0$, $E[w_{NT}(\alpha_1,\gamma_1)|\gamma_1]=0$. 	   (iii)
	 The sample size satisfies $N\sim T$ as both of them diverge to infinity,\footnote{This condition not crucial for the theory and can be relaxed at a cost of more complicated rate conditions.}
	\begin{align}
	&\frac{N^{-1} E[w_{NT}^4(\alpha_1,\gamma_1)]}{\{E[w_{NT}^2(\alpha_1,\gamma_1)]\}^2}=o(1),\label{eq:Lyapunov}\\
	&\frac{E\big[(E[w_{NT}(\alpha_1,\gamma_1)w_{NT}(\alpha_2,\gamma_1)|\alpha_1,\alpha_2])^2\big] + E\big[(E[w_{NT}(\alpha_1,\gamma_1)w_{NT}(\alpha_1,\gamma_2)|\gamma_1,\gamma_2])^2\big]}{\{E[w_{NT}^2(\alpha_1,\gamma_1)]\}^2}=o(1).\label{eq:Hall}
	\end{align}
	
\end{assumption}

Assumption \ref{a:deg_U-stat} (i) is standard in the literature that uses exchangeable arrays to model two-way clustering -- see \citet{menzel2021bootstrap} and \citet{DDG2019}, for example, for more discussion on exchangeable models. 
Assumption \ref{a:deg_U-stat} (ii) states that the $U$-statistic of interest is degenerate and has at least two moments. 
In the setting of two-way clustering, this condition does not impose any restriction, but comes naturally from the property of the Hoeffding decomposition (\ref{eq:Hoeffding}). 
Assumption \ref{a:deg_U-stat} (iii) imposes the same growth rate between $N$ and $T$.
This condition can be relaxed at the cost of more complicated moment restrictions in the conditional moments of $w_{NT}$. 
Part (iii) further imposes a standard Lyapunov-type condition (\ref{eq:Lyapunov}), as well as (\ref{eq:Hall}), a condition that is analogous to the second half of Condition (2.1) in \cite{hall1984central}. 
This Hall-type condition restricts the size of the second moment of the conditional cross-products, $E[w_{NT}(\alpha_1,\gamma_1)w_{NT}(\alpha_2,\gamma_1)|\alpha_1,\alpha_2]$ and $E[w_{NT}(\alpha_1,\gamma_1)w_{NT}(\alpha_1,\gamma_2)|\gamma_1,\gamma_2]$, relative to the size of the second moment of $w_{NT}$. See Remark \ref{rem:Hall} below for a detailed discussion on its implications.	

\begin{assumption}\label{a:two-way}
	(i) $E[D_{11}^2]\ge\kappa_{\min}>0$. (ii) In addition,
	\begin{align*}
	&\frac{E[a_{NT}(\alpha_1)^4]+ E[b_{NT}(\gamma_1)^4]}{N\{(E[a_{NT}(\alpha_1)^2])^2 +  (E[b_{NT}(\gamma_1)^2])^2 \}+ N^3 (E[w_{NT}(\alpha_1,\gamma_1)^2])^2}=o(1) \text{ and }\\
	&\frac{N\{E\left[(a_{NT}(\alpha_1)  w_{NT}(\alpha_1,\gamma_2))^2\right] + E\left[(b_{NT}(\gamma_1)  w_{NT}(\alpha_2,\gamma_1))^2\right]\}}{(E[a_{NT}(\alpha_1)^2])^2 +  (E[b_{NT}(\gamma_1)^2])^2 + N^2(E[w_{NT}(\alpha_1,\gamma_1)^2])^2}=o(1).
	\end{align*}
\end{assumption}

The first condition in Assumption \ref{a:two-way} imposes a standard Lyapunov condition on the leading terms in the Hoeffding decomposition (\ref{eq:Hoeffding}). The second condition in Assumption \ref{a:two-way} is a mild restriction on how the linear term and quadratic terms in the Hoeffding decomposition (\ref{eq:Hoeffding}) can correlate in second moments, relatively to the variances of each component. Note that, by construction, the linear and quadratic components are uncorrelated. This condition is analogous to condition (1.6) in \cite{eubank1999central}.

\begin{remark}\label{rem:Hall}
	By the orthonormal representation, \cite{khashimov1989limit} points out that the condition
	\begin{align*}
	\frac{E\big[(E[w_{NT}(\alpha_1,\gamma_1)w_{NT}(\alpha_2,\gamma_1)|\alpha_1,\alpha_2])^2\big] + E\big[(E[w_{NT}(\alpha_1,\gamma_1)w_{NT}(\alpha_1,\gamma_2)|\gamma_1,\gamma_2])^2\big]}{\{E[w_{NT}^2(\alpha_1,\gamma_1)]\}^2}=o(1)
	\end{align*}
	in Assumption \ref{a:deg_U-stat} (iii) is equivalent to
	\begin{align*}
	\frac{\sum_{j=1}^\infty |\lambda_{NTj}|^4}{\{E[w_{NT}^2(\alpha_1,\gamma_1)]\}^2}=o(1).
	\end{align*}
	Simplifying it further, we have this condition equivalent in turn to
	\begin{align*}
	\frac{\sum_{j=1}^\infty \lambda_{NTj}^4}{\left(\sum_{j=1}^\infty \lambda_{NTj}^2E[\phi_{NTj}^2(\alpha_1)l_{NTj}^2(\gamma_1)]\right)^2}=o(1).
	\end{align*}
	If the unknown orthonormal basis  satisfies that $E[\phi_{NTj}^2(\alpha_1)l_{NTj}^2(\gamma_1)]$ is bounded and bounded away from zero for all  $j$'s with $\lambda_{NTj}\ne 0$, then this condition further reduces to
	\begin{align*}
	\frac{\sum_{j=1}^\infty \lambda_{NTj}^4}{\left(\sum_{j=1}^\infty \lambda_{NTj}^2\right)^2}=o(1).
	\end{align*}
	If the first $J$ has $\lambda_{NTj}\ne 0$, then for large $J$, it is more plausible for
	\begin{align*}
	\frac{\sum_{j=1}^J \lambda_{NTj}^4}{\left(\sum_{j=1}^J \lambda_{NTj}^2\right)^2}
	=\frac{\sum_{j=1}^J \lambda_{NTj}^4}{\sum_{j=1}^J\sum_{j'=1}^J \lambda_{NTj}^2 \lambda_{NTj'}^2}
	\end{align*}		
	to be small since the numerator is a sum of $J$ positive terms while the denominator is a sum over $J^2$ positive terms.
	This observation has an implication for the type of sequences of interactive fixed-effect models that are permitted. 
	For example, if all the factors with non-zero eigenvalues have equal weights, then this condition suggests that models with a large number of factors can be well-approximated by gaussian limiting distributions. 
	\qed
\end{remark}

\begin{theorem}[Gaussian Approximation for Arrays of Two-Way Clustering]\label{thm:CLT_two-way}
Suppose that Assumptions \ref{a:deg_U-stat} and \ref{a:two-way} are satisfied, and the limit $\lim_{N\wedge T \to \infty }Var(L_{NT})/Var(W_{NT})$ exists in $[0,\infty]$. Then, we have $\sigma_{LW,NT}^{-1}(L_{NT}+W_{NT}) \stackrel{d}{\to} N(0,1)$,
where $\sigma_{LW,NT}^2=Var(L_{NT}+W_{NT})=Var(L_{NT})+Var(W_{NT})$.	
\end{theorem}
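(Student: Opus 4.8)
The plan is to apply a martingale central limit theorem to $S_{NT}:=L_{NT}+W_{NT}$ after normalizing by $\sigma_{LW,NT}$. First I would record the variance bookkeeping. The degeneracy $E[w_{NT}(\alpha_1,\gamma_1)\mid\alpha_1]=E[w_{NT}(\alpha_1,\gamma_1)\mid\gamma_1]=0$ from Assumption \ref{a:deg_U-stat}(ii) gives $\Cov(L_{NT},W_{NT})=0$, so $\sigma_{LW,NT}^2=\Var(L_{NT})+\Var(W_{NT})$. Since the $a_{NT}(\alpha_i)$ and $b_{NT}(\gamma_t)$ are independent and mean zero, $\Var(L_{NT})=N\,E[a_{NT}(\alpha_1)^2]+T\,E[b_{NT}(\gamma_1)^2]$; and using the orthonormal representation of Lemma \ref{orthonormal_representation} one gets $\Var(W_{NT})=NT\,E[w_{NT}(\alpha_1,\gamma_1)^2]=NT\sum_{j\ge1}\lambda_{NTj}^2$. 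These identities pin down the normalizer and become the targets of the conditional-variance step.

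Next I would build a single martingale-difference sequence capturing both pieces at once. Taking $N=T=n$ for concreteness (the $N\sim T$ case only changes the bookkeeping at the longer end), reveal the latent variables in the interleaved order $\alpha_1,\gamma_1,\alpha_2,\gamma_2,\dots$ and let $\mathcal F_k$ be generated by the first $k$ of them. Degeneracy yields the clean identity that $E[W_{NT}\mid\mathcal F_k]$ equals the sum of $w_{NT}(\alpha_i,\gamma_t)$ over exactly those pairs whose \emph{both} coordinates have been revealed, since any partially revealed pair contributes a conditional mean of zero. Consequently the increments are $D_{2m-1}=a_{NT}(\alpha_m)+\sum_{t=1}^{m-1}w_{NT}(\alpha_m,\gamma_t)$ and $D_{2m}=b_{NT}(\gamma_m)+\sum_{i=1}^{m}w_{NT}(\alpha_i,\gamma_m)$; one checks directly that $E[D_k\mid\mathcal F_{k-1}]=0$ and $\sum_k D_k=S_{NT}$.

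It then remains to verify the two hypotheses of the martingale CLT for $D_k/\sigma_{LW,NT}$. For the conditional-variance condition I would show $V_{NT}:=\sum_k E[D_k^2\mid\mathcal F_{k-1}]\to\sigma_{LW,NT}^2$ in probability; since martingale-difference orthogonality forces $E[V_{NT}]=\sigma_{LW,NT}^2$ exactly, this reduces to $\Var(V_{NT})=o(\sigma_{LW,NT}^4)$. Expanding each $E[D_k^2\mid\mathcal F_{k-1}]$ produces (a) deterministic diagonal terms that reassemble $\Var(L_{NT})+\Var(W_{NT})$, (b) linear fluctuations in $g(\gamma)=E[a_{NT}(\alpha)w_{NT}(\alpha,\gamma)\mid\gamma]$ and its $\alpha$-analogue, and (c) off-diagonal double sums of $\psi(\gamma_t,\gamma_{t'})=E[w_{NT}(\alpha,\gamma_t)w_{NT}(\alpha,\gamma_{t'})\mid\gamma_t,\gamma_{t'}]$ and its $\alpha$-analogue. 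Piece (b) is killed by the second condition of Assumption \ref{a:two-way}, and piece (c) by the Hall-type condition \eqref{eq:Hall} in Assumption \ref{a:deg_U-stat}(iii). For the Lindeberg condition I would bound $\sum_k E[D_k^4]=o(\sigma_{LW,NT}^4)$, where the $a_{NT},b_{NT}$ contributions are controlled by the first condition of Assumption \ref{a:two-way} and the $w_{NT}$ contributions by the Lyapunov condition \eqref{eq:Lyapunov}. The assumed existence of $\lim \Var(L_{NT})/\Var(W_{NT})$ in $[0,\infty]$ guarantees that the relative sizes of the linear and quadratic parts stabilize, so that these $o(1)$ bounds combine uniformly whether $L_{NT}$ dominates, $W_{NT}$ dominates, or they are comparable.

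The main obstacle is step (c): bounding the variance of the off-diagonal quadratic double sums $\sum_{t\neq t'}c_{tt'}\,\psi(\gamma_t,\gamma_{t'})$ (and its $\alpha$ version), whose coefficients $c_{tt'}$ grow linearly so that the naive count gives an $n^4$ factor matching $\Var(W_{NT})^2$. These terms are exactly what is absent under one-way clustering, so Hall's one-sample martingale argument does not transfer: the two-sample kernel $w_{NT}$ is shared across the two independent exchangeable dimensions, and one must control the covariances among $\psi$ over overlapping index pairs. Via the orthonormal representation this reduces to the $\sum_j\lambda_{NTj}^4$ versus $(\sum_j\lambda_{NTj}^2)^2$ comparison of Remark \ref{rem:Hall}, which is precisely where \eqref{eq:Hall} earns its keep. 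A secondary nuisance is tracking the cross-covariances between the linear and quadratic parts sitting inside each $D_k$, which the Eubank-type second condition of Assumption \ref{a:two-way} is designed to eliminate.
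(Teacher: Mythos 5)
Your proposal follows essentially the same route as the paper's proof: a martingale-difference array built by sequentially revealing the latent variables, with mean-zero increments guaranteed by the degeneracy of $w_{NT}$, followed by a Heyde--Brown-type martingale CLT verified through exactly your two conditions --- $\sum_s E[U_s^4]=o(1)$ via the Lyapunov condition \eqref{eq:Lyapunov} and the first part of Assumption \ref{a:two-way}, and $\Var\left(\sum_s E[U_s^2\mid\calF_{s-1}]\right)=o(1)$ via the Hall-type condition \eqref{eq:Hall} and the second (Eubank-type) part of Assumption \ref{a:two-way}. The only structural differences are cosmetic or organizational: the paper reveals the pair $(\alpha_s,\gamma_s)$ jointly for $s\le N$ (so its increment $U_s$ is your $D_{2s-1}+D_{2s}$) and then the remaining $\gamma_t$'s one at a time, and it disposes of the boundary cases $\lim \Var(L_{NT})/\Var(W_{NT})\in\{0,\infty\}$ separately --- by Lyapunov's CLT when $L_{NT}$ dominates and by Lemma \ref{thm:CLT_U-stat} when $W_{NT}$ dominates --- rather than running your unified bound across all regimes, although your observation that the $o(1)$ bounds are all stated relative to the combined $\sigma_{LW,NT}^4$ does make the unified treatment workable.

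One substantive slip in your bookkeeping: in your item (a), the diagonal terms of the conditional variance, namely $\sum_{1\le t<s}E[w_{st}^2\mid\gamma_t]$ and the $\alpha$-analogue $\sum_{1\le i<s}E[w_{is}^2\mid\alpha_i]$, are \emph{random}, not deterministic --- only their expectations reassemble $\Var(W_{NT})$. Their fluctuation is a third source of variance in $V_{NT}$, of order $N^3E[w_{11}^4]$ (the paper bounds it using Jensen's inequality and the law of total covariance, exploiting that $E[\Cov(w_{st}^2,w_{s't}^2\mid\gamma_t)]=0$ for $s\ne s'$), and relative to $\Var(W_{NT})^2\asymp N^4\{E[w_{11}^2]\}^2$ it is killed by precisely the Lyapunov condition \eqref{eq:Lyapunov}, which you had reserved only for the fourth-moment/Lindeberg step. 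Since that condition is already in your hypotheses, adding this term to your list (a)--(c) closes the argument without any new assumption, so the gap is one of enumeration rather than of method.
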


This theorem implies that even the potentially non-gaussian term $W_{NT}$ alone in the decomposition \eqref{eq:Hoeffding} can be gaussian.
The following example illustrates a case in point.

\begin{example}
Consider a sequence of interactive fixed effects models
\begin{align*}
D_{it}=\sum_{j=1}^\infty \lambda_{NTj} \alpha_{ij}\gamma_{tj},
\end{align*}
where $\alpha_i=(\alpha_{ij})_{1\le i\le N,j\in \mathbb N}$ and  $\gamma_t=(\gamma_{tj})_{1\le t \le T, j\in \mathbb N}$ are mutually independent stochastic processes that have zero mean and Gaussian marginal distribution for each $i$, $t$, and $j$, i.e. $\alpha_{ij}\sim N(0,1)$, $\gamma_{tj}\sim N(0,1)$, and $\lambda_{NTj}$ is a sequence of constants for each $N,T$. 
Note that $\hat\theta_{NT}=(NT)^{-1}\sumi\sumt D_{it}$ in this example consists only of the $W_{NT}$ term in the decomposition \eqref{eq:Hoeffding}.
By the theorem, 
$	
\hat\theta_{NT}
$
is asymptotically gaussian if 
$\alpha_{ij}\perp \alpha_{i'j}$, $\gamma_{tj}\perp \gamma_{t'j}$, and
$
{\sum_{j=1}^\infty |\lambda_{NTj}|^4}/{\{Var(D_{11})\}^2}=o(1).
$
	\qed
\end{example}

We now turn to the asymptotic gaussianity of the whole $\hat \theta_{NT}$. 
From (\ref{eq:Hoeffding}), observe that only the $R_{NT}$ component remains random conditionally on all $\alpha_i$'s and $\gamma_t$'s, and is conditionally independent over $(i,t)$. 
An application of the Lyapunov CLT therefore implies that $R_{NT}$ is conditionally asymptotically gaussian given $\alpha_i$'s and $\gamma_t$'s. 
Thus, by the asymptotic gaussianity of $L_{NT}+W_{NT} $ from Theorem \ref{thm:CLT_two-way}, applying Theorem 1 in \cite{chen2007asymptotic} yields the asymptotic gaussianity of $\hat\theta_{NT}$. 
This conclusion is summarized as a corollary below.
\begin{corollary}\label{cor:CLT_two-way}
Suppose that the same set of conditions as in Theorem \ref{thm:CLT_two-way} is satisfied. If $E[|D_{11}|^3]<\infty$ holds in addition,
then
$
\sigma^{-1}\hat \theta_{NT}\stackrel{d}{\to} N(0,1),
$
where $\sigma_{NT}^2=Var(L_{NT})+Var(W_{NT})+Var(R_{NT})$.
\end{corollary}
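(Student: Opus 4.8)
The plan is to exploit the Hoeffding-type decomposition \eqref{eq:Hoeffding}, $\hat\theta_{NT}=L_{NT}+W_{NT}+R_{NT}$, and to graft a conditional CLT for the remainder $R_{NT}$ onto the already-established unconditional gaussianity of $L_{NT}+W_{NT}$ from Theorem \ref{thm:CLT_two-way}. A preliminary bookkeeping step confirms that $\sigma_{NT}^2=Var(L_{NT})+Var(W_{NT})+Var(R_{NT})$ really equals $Var(\hat\theta_{NT})$: the identities $Cov(L_{NT},W_{NT})=0$ and $Cov(L_{NT},R_{NT})=0$ are listed right after \eqref{eq:Hoeffding}, while $Cov(W_{NT},R_{NT})=0$ follows because each $r_{it}$ has conditional mean zero given $(\alpha_i,\gamma_t)$ whereas $w_{it}$ is a function of $(\alpha_i,\gamma_t)$. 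Write $\mathcal{G}=\sigma((\alpha_i)_i,(\gamma_t)_t)$. The structural observation driving everything is that $X_{NT}:=L_{NT}+W_{NT}$ is $\mathcal{G}$-measurable (each of $a_i,b_t,w_{it}$ is a conditional expectation given the latent factors), whereas conditionally on $\mathcal{G}$ the summands $\xi_{it}:=D_{it}-E[D_{it}|\alpha_i,\gamma_t]$ composing $R_{NT}=(NT)^{-1}\sum_{i,t}\xi_{it}$ depend only on the i.i.d. idiosyncratic $\varepsilon_{it}$, which are independent of $\mathcal{G}$. Hence, given $\mathcal{G}$, the $\xi_{it}$ are independent across $(i,t)$ with conditional mean zero, so $R_{NT}$ is a genuine conditional triangular-array sum.

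Next I would establish a conditional Lyapunov CLT for $R_{NT}$. Its conditional variance $V_{NT}:=Var(R_{NT}\mid\mathcal{G})=(NT)^{-2}\sum_{i,t}E[\xi_{it}^2\mid\mathcal{G}]$ concentrates, by a law of large numbers, at $\sigma_{R,NT}^2:=Var(R_{NT})=(NT)^{-1}E[Var(D_{11}\mid\alpha_1,\gamma_1)]$, so that $V_{NT}/\sigma_{R,NT}^2\to 1$ in probability. The third-moment Lyapunov ratio $(NT)^{-3}\sum_{i,t}E[|\xi_{it}|^3\mid\mathcal{G}]/V_{NT}^{3/2}$ is of order $(NT)^{-1/2}\to 0$ in probability, and the finiteness of $E[|\xi_{11}|^3]$ is exactly where the extra hypothesis $E[|D_{11}|^3]<\infty$ is used (via conditional Jensen for the centering term). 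This yields the conditional CLT in the in-probability sense that, for each $s$, $E\left[\exp\left(\mathrm{i}s\sigma_{NT}^{-1}R_{NT}\right)\mid\mathcal{G}\right]\to\exp\left(-s^2\sigma_{R,NT}^2/(2\sigma_{NT}^2)\right)$ in probability.

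For the combination I would condition and factor the characteristic function,
\begin{align*}
E\left[\exp\left(\mathrm{i}s\sigma_{NT}^{-1}(X_{NT}+R_{NT})\right)\right]=E\left[\exp\left(\mathrm{i}s\sigma_{NT}^{-1}X_{NT}\right)\,E\left[\exp\left(\mathrm{i}s\sigma_{NT}^{-1}R_{NT}\right)\Big|\,\mathcal{G}\right]\right].
\end{align*}
The inner conditional characteristic function converges in probability to $\exp(-s^2\sigma_{R,NT}^2/(2\sigma_{NT}^2))$ by the previous step, while $X_{NT}/\sigma_{LW,NT}\stackrel{d}{\to}N(0,1)$ by Theorem \ref{thm:CLT_two-way}. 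Passing to any subsequence along which $\sigma_{LW,NT}^2/\sigma_{NT}^2\to c^2$ (hence $\sigma_{R,NT}^2/\sigma_{NT}^2\to 1-c^2$ by the variance decomposition), the two factors tend to $e^{-s^2c^2/2}$ and $e^{-s^2(1-c^2)/2}$, whose product equals $e^{-s^2/2}$ irrespective of $c$; since the limit is independent of the subsequence, the full characteristic function converges to $e^{-s^2/2}$, and L\'evy's continuity theorem delivers $\sigma_{NT}^{-1}\hat\theta_{NT}\stackrel{d}{\to}N(0,1)$. This conditional-to-unconditional passage is precisely what Theorem 1 of \cite{chen2007asymptotic} formalizes.

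The main obstacle is making the factorization rigorous: I must justify passing the conditional limit through the outer expectation, which requires the conditional CLT for $R_{NT}$ to hold in the in-probability (not merely pathwise) sense together with the uniform boundedness of characteristic functions so that dominated convergence applies, and I must ensure the random normalization $V_{NT}$ can be replaced by the deterministic $\sigma_{R,NT}^2$ without altering the limit. The subsequence device neatly removes any need for $\sigma_{LW,NT}^2/\sigma_{NT}^2$ itself to converge, so the genuinely delicate ingredient is the simultaneous control of the conditional Lyapunov ratio and the conditional-variance law of large numbers; everything else is routine.
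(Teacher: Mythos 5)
Your proposal is correct and follows essentially the same route as the paper: the paper's proof likewise observes that, conditionally on $\sigma((\alpha_i)_i,(\gamma_t)_t)$, only $R_{NT}$ remains random and is a sum of conditionally independent mean-zero terms, applies a conditional Lyapunov CLT to it (this is where $E[|D_{11}|^3]<\infty$ enters), and then combines this with the gaussianity of $L_{NT}+W_{NT}$ from Theorem \ref{thm:CLT_two-way} by invoking Theorem 1 of \cite{chen2007asymptotic}. Your characteristic-function factorization and subsequence argument simply spell out in detail the conditional-to-unconditional passage that the paper delegates to that cited theorem, as you yourself note.
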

\begin{remark}
The asymptotic gaussianity presented here is related to but differs in nature from those results for sparse networks graphs \citep[e.g.,][]{bickel2011method,graham2022kernel}. They rely on assumptions on the sequence of DGPs in which the $R_{NT}$ component is guaranteed to dominate the $W_{NT}$ component asymptotically, and thus their statistics are asymptotically gaussian regardless of whether the $W_{NT}$ component is gaussian or not.
In contrast, we guarantee the gaussianity of the potentially non-gaussian component $W_{NT}$, and hence the gaussianity of the whole $\hat\theta_{NT}$ follows without relying on the specific DGPs that have the $R_{NT}$ term dominate the $W_{NT}$ term.
\qed
\end{remark}

\section{Simulations}\label{sec:simulations}

We focus on the data generating processes that correspond to the potentially non-gaussian components in \citet[][Section 2.2]{menzel2021bootstrap}, i.e., the completely degenerate two-sample U-statistics, and their variants.
Such data generating processes are derived from the orthonormal representation in Section \ref{sec:main}.
Specifically, we generate a sample $\{D_{it}: 1 \le i \le N, 1 \le t \le T \}$ by
\begin{align*}
D_{it} = J^{-1} \sum_{j=1}^J (\alpha_{ij} - \delta) (\gamma_{tj} - \delta) + \phi \varepsilon_{it},
\end{align*}
where $\alpha_{ij} \sim N(0,1)$, $\gamma_{tj} \sim N(0,1)$, and $\varepsilon_{it} \sim N(0,1)$ are mutually independent and independent across $i \in \{1,\cdots,N\}$, $t \in \{1,\cdots,T\}$ and $j \in \{1,\cdots,J\}$. Note that this DGP is a special case of the generic DGP of Equation (\ref{eq:when}) with $\lambda_{j}=J^{-1}$.
We vary the data generating parameters $\delta$, $J$, and $\phi$ across sets of Monte Carlo simulations.
Each set of simulations consists of 10,000 random draws of $\{D_{it}: 1 \le i \le N, 1 \le t \le T \}$.
The sample size is set to $N=T=50$ throughout.

For the population mean
$
\theta = E[D_{it}]
$
as the parameter of interest, consider the two-way sample mean estimator
$
\hat\theta_{NT} = (NT)^{-1} \sum_{i=1}^N \sum_{t=1}^T D_{it}.
$
Note that our theory from Section \ref{sec:main} predicts that the gaussian approximation is asymptotically reasonable unless both $|\delta|$ and $J$ are too small.
For each draw, we construct the 95\% confidence intervals by two existing methods:
one is based on the widely used two-way cluster-robust standard error \citep[e.g.,][]{CGM2011,thompson2011simple} which presumes the asymptotic gaussianity;
and the other is the recent bootstrap method by \citet{menzel2021bootstrap} which does not require the asymptotic gaussianity.
We will hereafter refer to the first method by `CGM' and the second method by `M' for brevity.

Figure \ref{fig:qq_fix_phi} illustrates QQ plots of $\hat\theta_{NT}$ for $\delta \in \{0.0,0.5,1.0\}$, $J \in \{1,50,100\}$ and $\phi = 0.5$.
The top left plot is associated with the most pathetic case with small $|\delta|$ and small $J$, for which our theory cannot guarantee that the gaussian approximation is asymptotically reasonable.
This QQ plot shows that the sample quantiles of $\hat\theta_{NT}$ indeed fail to conform with the theoretical quantiles.
On the other hand, when $|\delta|$ or $J$ takes a larger value, the gaussian approximation becomes more reasonable.
Specifically, the remaining eight QQ plots in Figure \ref{fig:qq_fix_phi} show that the sample quantiles are sufficiently close to the theoretical quantiles.
These results support our theoretical prediction from Section \ref{sec:main}.

\begin{figure}
\vspace{0.5cm}
\begin{tabular}{ccc}
\includegraphics[width=0.3\textwidth]{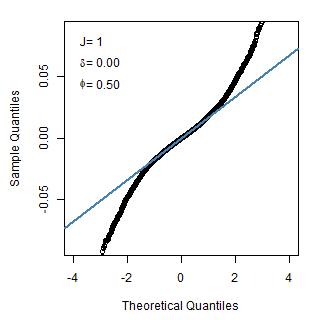} &
\includegraphics[width=0.3\textwidth]{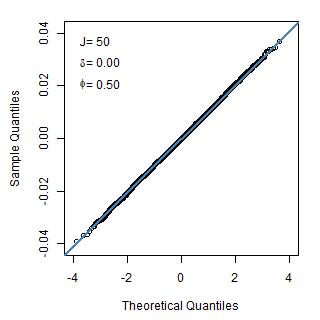} &
\includegraphics[width=0.3\textwidth]{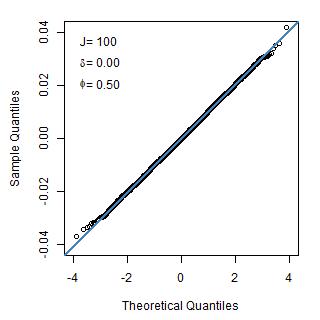}	
\\
\includegraphics[width=0.3\textwidth]{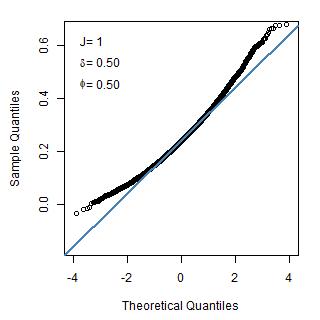} &
\includegraphics[width=0.3\textwidth]{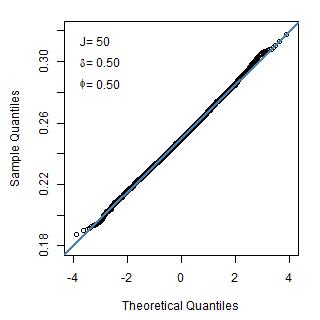} &
\includegraphics[width=0.3\textwidth]{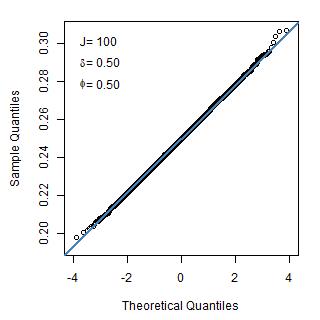}	
\\
\includegraphics[width=0.3\textwidth]{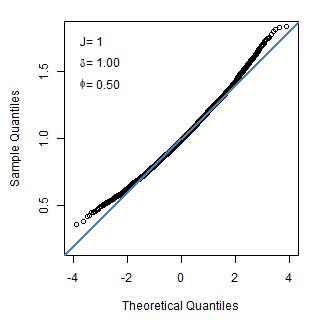} &
\includegraphics[width=0.3\textwidth]{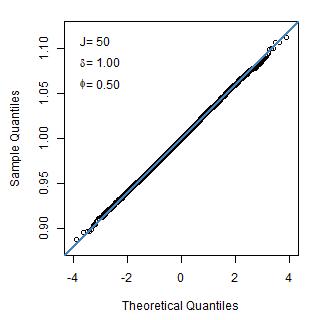} &
\includegraphics[width=0.3\textwidth]{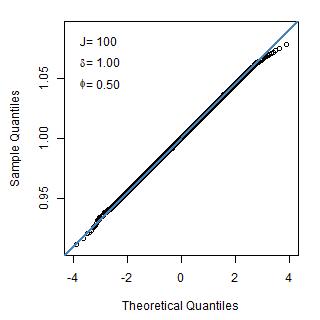}	
\end{tabular}
\caption{QQ plots for $\delta \in \{0.0,0.5,1.0\}$, $J \in \{1,50,100\}$ and $\phi = 0.5$. The results are based on 10,000 Monte Carlo iterations.}\label{fig:qq_fix_phi}${}$
\end{figure}

Figure \ref{fig:cover_fix_phi} illustrates coverage frequencies as a function of $\phi \in [0.0,1.0]$ for $J \in \{1,50,100\}$ and $\delta \in \{0.0,0.5,1.0\}$, with the nominal coverage probability of 95\%.
The positions of the nine coverage plots in Figure \ref{fig:cover_fix_phi} correspond to those of the nine QQ plots in Figure \ref{fig:qq_fix_phi}.
Thus, the top left plot is associated with the most pathetic case in which $\hat\theta_{NT}$ is far away from gaussian.
In this plot, `CGM' suffers from severe under-coverage and `M' in contrast suffers from over-coverage.
The under-coverage by `CGM' can be explained by the non-gaussianity of $\hat\theta_{NT}$.

\begin{figure}
\vspace{0.5cm}
\begin{tabular}{ccc}
\includegraphics[width=0.3\textwidth]{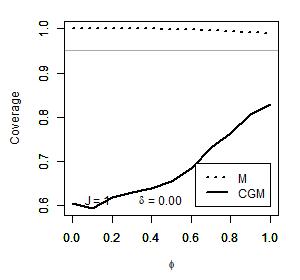} &
\includegraphics[width=0.3\textwidth]{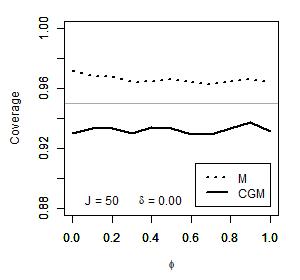} &
\includegraphics[width=0.3\textwidth]{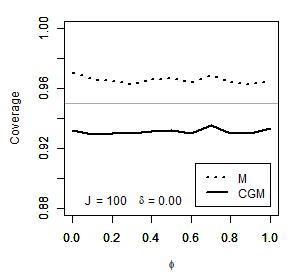}	
\\
\includegraphics[width=0.3\textwidth]{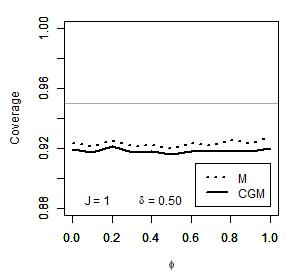} &
\includegraphics[width=0.3\textwidth]{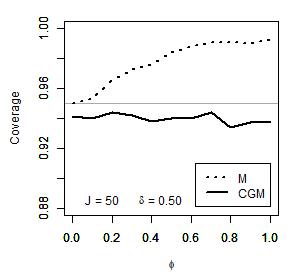} &
\includegraphics[width=0.3\textwidth]{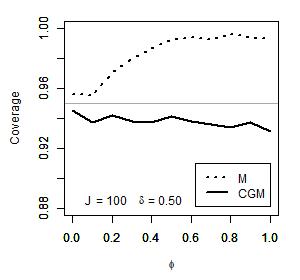}	
\\
\includegraphics[width=0.3\textwidth]{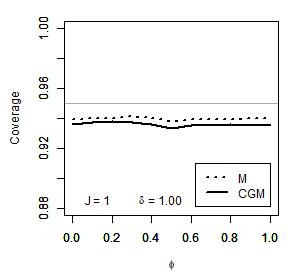} &
\includegraphics[width=0.3\textwidth]{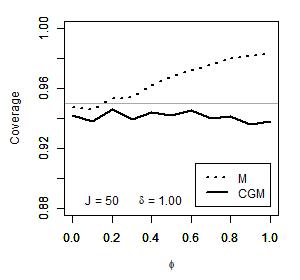} &
\includegraphics[width=0.3\textwidth]{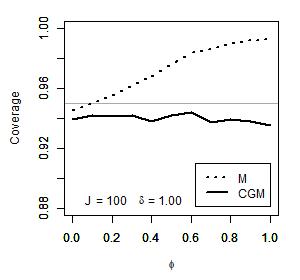}	
\end{tabular}
\caption{Coverage frequencies as a function of $\phi \in [0.0,1.0]$ for $J \in \{1,50,100\}$, $\delta \in \{0.0,0.5,1.0\}$. The nominal probability is 95\%. The results are based on 10,000 Monte Carlo iterations.}\label{fig:cover_fix_phi}${}$
\end{figure}

In the remaining eight plots in Figure \ref{fig:cover_fix_phi}, on the other hand, `CGM' achieves fairly precise coverage.
Again, as predicted by our theory from Section \ref{sec:main} and also evidenced by the QQ plots in Figure \ref{fig:qq_fix_phi}, these remaining eight cases admit reasonably close gaussian approximations for the distribution of $\hat\theta_{NT}$.
Therefore, these precise coverage results by `CGM'  in these eight cases meet our expectations.

Figure \ref{fig:cover_fix_phi} discretely varies $J$ and $\phi$ while it continuously varies $\phi$.
We next change our perspectives by continuously varying $\delta$.
Figure \ref{fig:cover_fix_delta} illustrates coverage frequencies as a function of $\delta \in [0.0,1.0]$ for $J \in \{1,50,100\}$ and $\phi \in \{0.0,0.5,1.0\}$.
As before, the nominal probability is 95\%.

\begin{figure}
\vspace{0.5cm}
\begin{tabular}{ccc}
\includegraphics[width=0.3\textwidth]{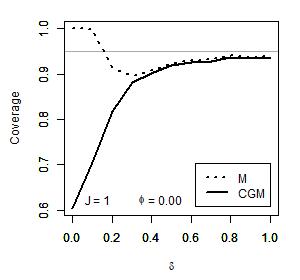} &
\includegraphics[width=0.3\textwidth]{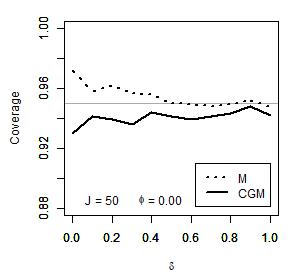} &
\includegraphics[width=0.3\textwidth]{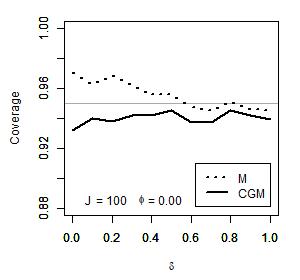}	
\\
\includegraphics[width=0.3\textwidth]{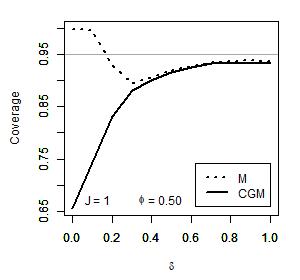} &
\includegraphics[width=0.3\textwidth]{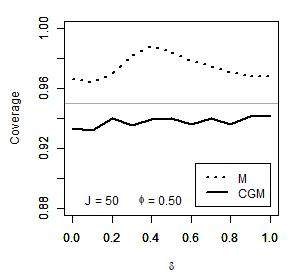} &
\includegraphics[width=0.3\textwidth]{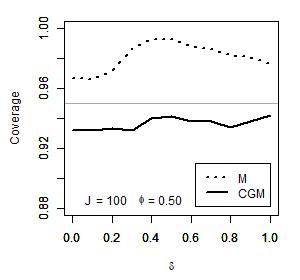}	
\\
\includegraphics[width=0.3\textwidth]{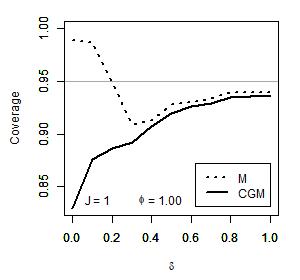} &
\includegraphics[width=0.3\textwidth]{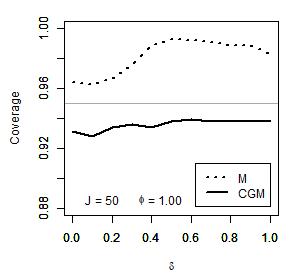} &
\includegraphics[width=0.3\textwidth]{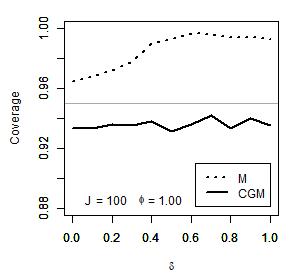}	
\end{tabular}
\caption{Coverage frequencies as a function of $\delta \in [0.0,1.0]$ for $J \in \{1,50,100\}$ and $\phi \in \{0.0,0.5,1.0\}$. The nominal probability is 95\%. The results are based on 10,000 Monte Carlo iterations.}\label{fig:cover_fix_delta}${}$
\end{figure}

On the three plots in the left column of Figure \ref{fig:cover_fix_delta}, where we set $J=1$, `CGM' suffers from under-coverage and `M' suffers from over-coverage in the region where $\delta$ takes small values.
As $\delta$ increases, however, their coverage accuracy improves.
On the remaining six plots in the middle and right columns of Figure \ref{fig:cover_fix_delta}, where we set $J=50$ and $100$, respectively, `CGM' achieves fairly accurate coverage regardless of the values taken by $\delta$.
These results again conform with our theoretical prediction of asymptotic gaussianity, which holds unless both $|\delta|$ and $J$ are too small.

Thus, far, we have analyzed the simulation results by continuously varying $\phi$ (Figure \ref{fig:cover_fix_phi}) and $\delta$ (Figure \ref{fig:cover_fix_delta}).
We next present simulation results with finer variations of $J$.
Figure \ref{fig:cover_fix_J} illustrates coverage frequencies as a function of $J \in \{1,\cdots,100\}$ for $\delta \in \{0.0,0.5,1.0\}$ and $\phi \in \{0.0,0.5,1.0\}$.

\begin{figure}
\vspace{0.5cm}
\begin{tabular}{ccc}
\includegraphics[width=0.3\textwidth]{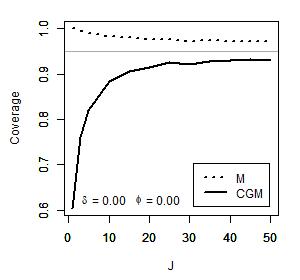} &
\includegraphics[width=0.3\textwidth]{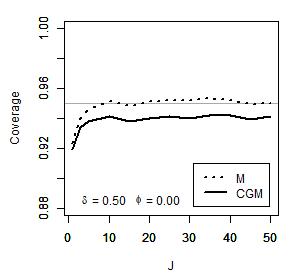} &
\includegraphics[width=0.3\textwidth]{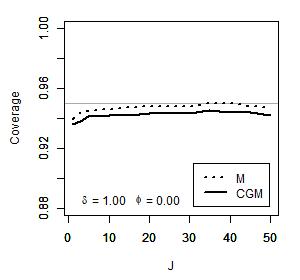}	
\\
\includegraphics[width=0.3\textwidth]{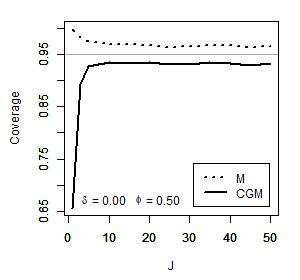} &
\includegraphics[width=0.3\textwidth]{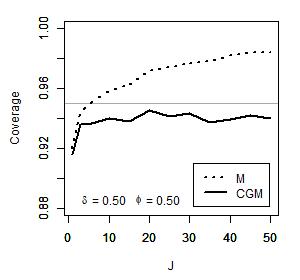} &
\includegraphics[width=0.3\textwidth]{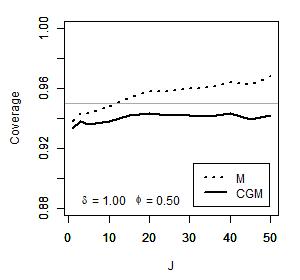}	
\\
\includegraphics[width=0.3\textwidth]{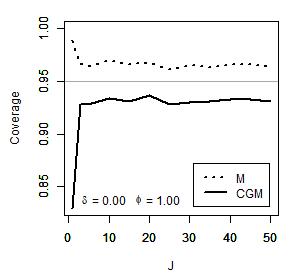} &
\includegraphics[width=0.3\textwidth]{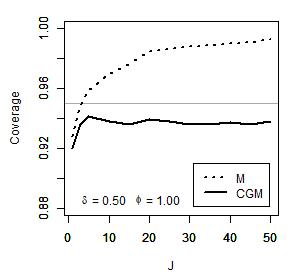} &
\includegraphics[width=0.3\textwidth]{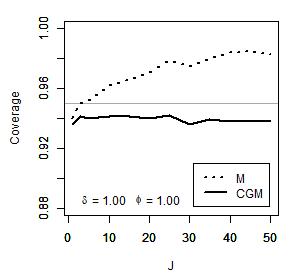}	
\end{tabular}
\caption{Coverage frequencies as a function of $J \in \{1,\cdots,100\}$ for $\delta \in \{0.0,0.5,1.0\}$ and $\phi \in \{0.0,0.5,1.0\}$. The nominal probability is 95\%. The results are based on 10,000 Monte Carlo iterations.}\label{fig:cover_fix_J}${}$
\end{figure}

On the three plots in the left column of Figure \ref{fig:cover_fix_J}, where we set $\delta=0.00$, `CGM' suffers from under-coverage and `M' suffers from over-coverage in the region where $J$ takes small values.
As $J$ increases, however, their coverage accuracy improves.
On the remaining six plots in the middle and right columns of Figure \ref{fig:cover_fix_J}, where we set $\delta=0.50$ and $1.00$, respectively, `CGM' achieves fairly accurate coverage regardless of the values of $J$.
These results again conform with our theoretical prediction of asymptotic gaussianity, which holds unless both $|\delta|$ and $J$ are too small. 
We emphasize that gaussian limiting distributions still provide good approximations in most cases even when $\phi=0$ and $\delta=0$, i.e. even when only the degenerate two-sample $U$-statistics are present in the DGP.

In summary, the asymptotic gaussian approximation appears reasonable except for the pathetic case in which both $|\delta|$ and $J$ are too small, as predicted by our theory from Section \ref{sec:main}.
Consequently, the conventional two-way cluster-robust standard errors \citep[e.g.,][]{CGM2011,thompson2011simple}, which presume the asymptotic gaussianity, perform sufficiently well unless both $|\delta|$ and $J$ are too small.
Recall that we have already focused on the data generating processes that correspond to the potentially non-gaussian components in \citet[][Section 2.2]{menzel2021bootstrap}.
Even within this class with the potential non-gaussianity, we confirm that gaussianity is in fact rather common than exceptional.

\section{Discussions}

Thanks to the earlier work in the literature, we now have a more complete understanding of the settings and conditions under which the asymptotic gaussianity is guaranteed.
To our knowledge, there are three other scenarios in addition to our conditions in which the gaussianity holds under two-way clustering. 
Let us maintain $N\sim T$.
Table \ref{tab:conditions} summarizes them.

\begin{table}[t]
	\centering	
	\begin{tabular}{c|ccccc}
		Conditions &  Consequence&Reference\\
		\hline
		\hline
		Non-degenerate&
	$L_{NT}$ dominates& \cite{DDG2019}
		 \\
		Sparse network&$L_{NT}+R_{NR}$ dominates&\cite{graham2022sparse} \\
		Kernel density estimation& $L_{NT}+R_{NR}$ dominates &\cite{graham2022kernel}\\
		Our conditions&None of the term needs to dominate& This paper\\
		\hline
	\end{tabular}
	\caption{Conditions that guarantee the gaussianity under two-way clustering.}${}$
	\label{tab:conditions}
\end{table}
 
The most well-known case is the ``non-degeneracy'' condition imposed in e.g. \cite{DDG2019}, which assumes that at least one of $E[D_{it}|\alpha_i]$ and $E[D_{it}|\gamma_t]$ is random with its variance bounded away from zero. 
This condition imposes that at least one of the latent cluster-specific shocks has an impact on the level of the conditional mean. 
If this condition is satisfied, then $\hat\theta_{NT}\approx L_{NT}$ and the asymptotic gaussianity holds regardless of how the $W_{NT}$ term behaves. 
  
The next case is illustrated by the ``sparse network'' asymptotics, which was first considered in \cite{bickel2011method} for dyadic network graphs, and studied under two-way clustering by \cite{graham2022sparse}. 
In this case, one assumes $D_{it}\in\{0,1\}$ and $p:=P(D_{it}=1)$ is converging to zero or one. 
Under this setting, one can exploit the binary nature of the outcome variable and show $W_{NT}=O_p(p/N)$ to be asymptotically dominated by $R_{NT}=O_p(p^{1/2}/N)$, and thus $\hat \theta_{NT}\approx L_{NT}+R_{NT}$. 
Therefore, the asymptotic gaussianity is guaranteed by an application of a martingale CLT. 
The third case is related to the second one and considers kernel density estimation for two-way clustered random variables, studied in \cite{graham2022kernel}.
 
In light of these conditions for the asymptotic gaussianity along with our CLT result, we propose the decision tree shown in Figure \ref{fig:flow_chart} to researchers considering adoption of the TWCR standard errors for inference.

\begin{figure}[t]
\centering
\begin{tikzpicture}
[sibling distance=15em,
every node/.style = {shape=rectangle, rounded corners,
	draw, align=center,
	top color=white, bottom color=blue!20}]]
  \node {Non-degenerate?\\
  	(As assumed by \cite{DDG2019})}
child { node {Yes} 
	child{ node {TWCR is valid}	}
}
       child { node {No} 
       	child{ node{Sparse network?\\
       			(As assumed by \cite{graham2022sparse})}
       		child{ node{Yes} child{ node{TWCR is valid}}}
       	child{ node{No}child{ node{Very few latent factors?\\
       				(As ruled out by this paper)} child{ node{Yes} child{node{TWCR is not valid}}}
       	child{node{No} child{node{TWCR is still valid}}}   
}
}
}
};
\end{tikzpicture}
\caption{A decition tree for researchers considering an adoption of the two-way cluster-robust (TWCR) standard errors for inference.}${}$
\label{fig:flow_chart}
\end{figure}
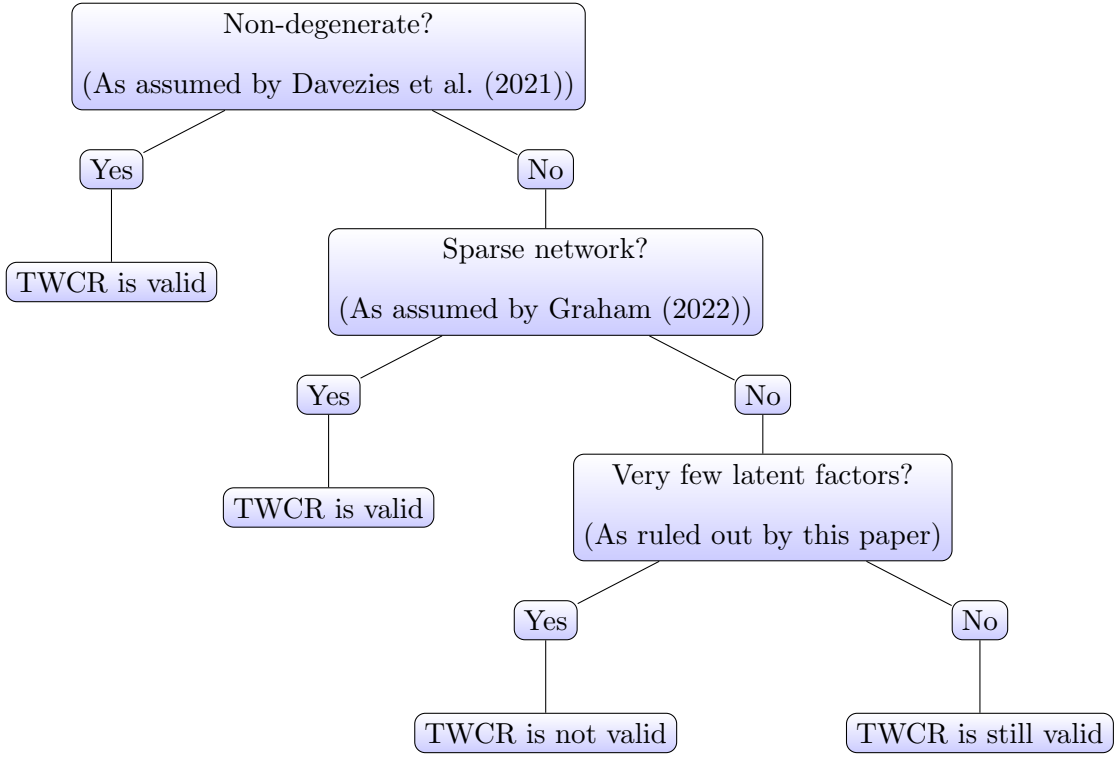

First, decide whether the researcher is willing to assume the non-degeneracy or a sparse network.
If the answer is yes, then one can use the TWCR standard errors.
Otherwise, ask if the model under consideration involves only very few latent factors.
If the answer is no, then one can still use the TWCR standard errors.
In many economic applications, researchers would rather want to suppose that there are many latent factors in economic agents, e.g., many dimensions of ability measures, many dimensions of health measures, and many dimensions of cultural attributes.
Our theoretical result endorses the validity of the TWCR standard errors in these applications that are common in the economic research.

In summary, we conclude that the two-way cluster-robust (TWCR) standard errors, which were proposed by \citet{CGM2011} and \citet{thompson2011simple} and have been used by thousands of empirical research papers, are valid under most, if not all, circumstances of the economic research.

\vspace{0.5cm}
\appendix
\section*{Appendix}

\section{Auxiliary Results}
	\begin{lemma}\label{lem:nice}
	If $E[w_{NT}(\alpha_1,\gamma_1)|\alpha_1]=0$, $E[w_{NT}(\alpha_1,\gamma_1)|\gamma_1]=0$, then 
	\begin{align*}
	E[w_{i_1t_1}w_{i_2t_2}...w_{i_kt_k}]=0
	\end{align*}
	if at leat one of $i\in\{i_1,...,i_k\}$ or $t\in\{t_1,...,t_k\}$ appears only once.
	\begin{proof}
		Consider the case in which $i_1$ appears only once.
		In this case, we have
		\begin{align*}
		E[w_{i_1t_1}w_{i_2t_2}...w_{i_kt_k}]
		=&
		E[E[w_{i_1t_1}w_{i_2t_2}...w_{i_kt_k}|\alpha_{i_2},...,\alpha_{i_k}, \gamma_{t_1},...,\gamma_{t_k}]]\\
		=&
		E[w_{i_2t_2}...w_{i_kt_k}E[w_{i_1t_1}|\alpha_{i_2},...,\alpha_{i_k}, \gamma_{t_1},...,\gamma_{t_k}]]\\
		=&
		E[w_{i_2t_2}...w_{i_kt_k}\underbrace{E[w_{i_1t_1}| \gamma_{t_1}]}_{=0 }].
		\end{align*} 
		The statement in the other case can be shown analogously.
	\end{proof}
\end{lemma}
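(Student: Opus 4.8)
The plan is to exploit the product structure together with the mutual independence of the latent sequences and the two conditional-mean-zero (degeneracy) hypotheses stated in the lemma. Without loss of generality I would treat the case in which some row index appears exactly once among $\{i_1,\dots,i_k\}$, since the case in which some column index appears exactly once follows by the symmetric argument that interchanges the roles of $\alpha$ and $\gamma$. After relabeling the factors, I may assume it is $i_1$ that occurs only once.

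First I would apply the tower property, conditioning on the $\sigma$-algebra generated by every latent variable other than $\alpha_{i_1}$, namely $\{\alpha_{i_2},\dots,\alpha_{i_k},\gamma_{t_1},\dots,\gamma_{t_k}\}$. Because $i_1$ occurs only once, the factor $w_{i_1 t_1}=w_{NT}(\alpha_{i_1},\gamma_{t_1})$ is the only one among $w_{i_1 t_1},\dots,w_{i_k t_k}$ that depends on $\alpha_{i_1}$; every other factor $w_{i_m t_m}$ with $m\ge 2$ is a function of $(\alpha_{i_m},\gamma_{t_m})$ with $i_m\neq i_1$, hence is measurable with respect to the conditioning $\sigma$-algebra. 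I note in particular that factors sharing the column index $t_1$ cause no difficulty, since $\gamma_{t_1}$ is itself part of the conditioning set. I would therefore pull the product $w_{i_2 t_2}\cdots w_{i_k t_k}$ out of the inner conditional expectation.

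The next step is to evaluate the remaining inner term $E[w_{NT}(\alpha_{i_1},\gamma_{t_1})\mid \alpha_{i_2},\dots,\alpha_{i_k},\gamma_{t_1},\dots,\gamma_{t_k}]$. By the independence of the latent sequences in Assumption \ref{a:deg_U-stat}(i), $\alpha_{i_1}$ is independent of $\{\alpha_{i_2},\dots,\alpha_{i_k},\gamma_{t_2},\dots,\gamma_{t_k}\}$, while $w_{NT}(\alpha_{i_1},\gamma_{t_1})$ depends on the conditioning variables only through $\gamma_{t_1}$; consequently the conditional expectation collapses to $E[w_{NT}(\alpha_{i_1},\gamma_{t_1})\mid \gamma_{t_1}]$, which vanishes by the hypothesis $E[w_{NT}(\alpha_1,\gamma_1)\mid\gamma_1]=0$. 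The pulled-out product multiplied by this zero inner term is zero, and taking the outer expectation yields the claim.

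The argument is essentially mechanical once the conditioning set is chosen correctly, so I do not anticipate a genuine obstacle. The only point deserving care -- and the place where I would be most explicit -- is the reduction of the inner conditional expectation to a conditional expectation given $\gamma_{t_1}$ alone: one must invoke the mutual independence of the $\alpha$ sequence and the $\gamma$ sequence to discard all conditioning variables other than $\gamma_{t_1}$, and one must observe that the uniqueness of $i_1$ is precisely what guarantees that no other factor reintroduces a dependence on $\alpha_{i_1}$ after it has been pulled out.
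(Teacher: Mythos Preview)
Your proposal is correct and follows essentially the same route as the paper: condition on all latent variables except $\alpha_{i_1}$, pull out the remaining factors, and reduce the inner conditional expectation to $E[w_{NT}(\alpha_{i_1},\gamma_{t_1})\mid\gamma_{t_1}]=0$. The only difference is that you spell out the independence justification for collapsing the conditioning set more explicitly than the paper does.
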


The following Theorem shows a CLT for the two-sample degenerate $U$-statistic $W_{NT}$ component in the decomposition (\ref{eq:Hoeffding}). 
\begin{lemma}[Central Limit Theorem for Degenerate Two-Sample U-Statistics]\label{thm:CLT_U-stat}
If Assumption \ref{a:deg_U-stat} is satisfied, then $\sigma_{W,NT}^{-1}W_{NT} \stackrel{d}{\to} N(0,1)$, where $\sigma_{W,NT}^2=Var(W_{NT})$.	
\end{lemma}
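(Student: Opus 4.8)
The plan is to establish this via a martingale central limit theorem applied to a filtration that interleaves the two types of latent variables and thereby exposes the two-sided degeneracy of $w_{NT}$. Assume $N=T$ for concreteness (the general $N\sim T$ case only alters the order in which the leftover latent variables are revealed at the end, as the footnote to Assumption \ref{a:deg_U-stat} anticipates). I would reveal the variables in the zig-zag order $\alpha_1,\gamma_1,\alpha_2,\gamma_2,\dots$ and set $\mathcal F_{2k-1}=\sigma(\alpha_1,\dots,\alpha_k,\gamma_1,\dots,\gamma_{k-1})$ and $\mathcal F_{2k}=\sigma(\alpha_1,\dots,\alpha_k,\gamma_1,\dots,\gamma_k)$. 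The crucial observation is that, by the degeneracy in Assumption \ref{a:deg_U-stat}(ii), a term $w_{it}$ enters $E[W_{NT}\mid\mathcal F_m]$ only after \emph{both} $\alpha_i$ and $\gamma_t$ have been revealed, so the telescoping increments are
\[
D_{2k-1}=\sum_{t=1}^{k-1}w_{NT}(\alpha_k,\gamma_t),\qquad D_{2k}=\sum_{i=1}^{k}w_{NT}(\alpha_i,\gamma_k),
\]
which form a martingale difference sequence satisfying $W_{NT}=\sum_m D_m$. I would then invoke a standard martingale CLT (Brown; Hall--Heyde), for which it suffices to verify a Lindeberg condition and the convergence $s_{NT}^{-2}\sum_m E[D_m^2\mid\mathcal F_{m-1}]\stackrel{p}{\to}1$, where $s_{NT}^2:=\sigma_{W,NT}^2=NT\,E[w_{NT}^2]$.

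For the Lindeberg step, conditioning $D_{2k}$ on $\gamma_k$ makes it a sum of $k$ i.i.d.\ summands that are mean-zero by degeneracy, so a direct fourth-moment computation gives $E[D_{2k}^4]\lesssim k\,E[w_{NT}^4]+k^2\,E[(E[w_{NT}(\alpha_1,\gamma_1)^2\mid\gamma_1])^2]$, and the conditional-moment term is at most $E[w_{NT}^4]$ by Jensen. Summing over $m$ and dividing by $s_{NT}^4=N^4(E[w_{NT}^2])^2$ leaves a bound of order $E[w_{NT}^4]/\{N(E[w_{NT}^2])^2\}$, which is $o(1)$ by the Lyapunov condition \eqref{eq:Lyapunov}; since $\sum_m E[D_m^4]/s_{NT}^4\to0$ implies the Lindeberg condition, this step is complete.

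For the conditional-variance step, the increments evaluate to weighted quadratic forms in the two kernels $p(\alpha,\alpha')=E[w_{NT}(\alpha,\gamma_1)w_{NT}(\alpha',\gamma_1)\mid\alpha,\alpha']$ and $q(\gamma,\gamma')=E[w_{NT}(\alpha_1,\gamma)w_{NT}(\alpha_1,\gamma')\mid\gamma,\gamma']$, namely $\sum_m E[D_m^2\mid\mathcal F_{m-1}]=\sum_{i,i'}c_{ii'}\,p(\alpha_i,\alpha_{i'})+\sum_{t,t'}d_{tt'}\,q(\gamma_t,\gamma_{t'})$ with weights $c_{ii'},d_{tt'}=O(N)$. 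Because these conditional variances have unconditional mean summing exactly to $s_{NT}^2$, it remains to show their total variance is $o(s_{NT}^4)$. I would split each quadratic form into diagonal and off-diagonal parts: the diagonal part is a sum of independent terms whose variance is again controlled by \eqref{eq:Lyapunov}; the off-diagonal part is a degenerate $U$-statistic in the $\alpha$'s (resp.\ $\gamma$'s)—degenerate since $E[p(\alpha_1,\alpha_2)\mid\alpha_1]=0$—so its variance is of order $N^4\,E[p(\alpha_1,\alpha_2)^2]$ (resp.\ $N^4\,E[q(\gamma_1,\gamma_2)^2]$). The numerators $E[p^2]$ and $E[q^2]$ are precisely the quantities in the Hall-type condition \eqref{eq:Hall}, so dividing by $s_{NT}^4$ yields $o(1)$; the two blocks are uncorrelated because one depends only on the $\alpha$'s and the other only on the $\gamma$'s. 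Combining the two steps delivers the claim.

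The hard part will be this last step: the one-sample martingale argument of \cite{hall1984central} does not transfer, because two-way clustering forbids reducing $W_{NT}$ to increments along a single index. The key is to identify the interleaved filtration above so that the residual randomness in each conditional variance is a genuinely \emph{degenerate} quadratic form whose fluctuation is governed exactly by the Hall-type quantities $E[p^2]$ and $E[q^2]$ in \eqref{eq:Hall}, rather than by uncontrolled fourth moments of the unknown orthonormal basis functions of Lemma \ref{orthonormal_representation}.
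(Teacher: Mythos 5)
Your proposal is correct and takes essentially the same route as the paper's own proof: a martingale CLT (Brown/Heyde) under a sequential-revelation filtration, verifying a fourth-moment Lyapunov-type condition via \eqref{eq:Lyapunov} and showing the variance of the summed conditional variances is $o(\sigma_{W,NT}^4)$, with the diagonal parts controlled by \eqref{eq:Lyapunov} (via Jensen) and the off-diagonal parts recognized as degenerate quadratic forms in $p(\alpha_i,\alpha_{i'})$ and $q(\gamma_t,\gamma_{t'})$ whose second moments are exactly the Hall-type quantities in \eqref{eq:Hall}. The only difference is cosmetic: your zig-zag filtration reveals $\alpha_k$ and $\gamma_k$ in two half-steps whose increments $D_{2k-1}+D_{2k}$ aggregate exactly to the paper's L-shaped blocks $U_s$ for $s\le N$ (with the $\gamma$-only tail increments for $N<s\le T$ handling $N\sim T$ just as you anticipate), which merely streamlines the cross-term cancellations the paper checks by hand.
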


\begin{remark}
In fact, the statement of the lemma holds not only for the $W_{NT}$ component in the decomposition (\ref{eq:Hoeffding}), but also for any generic completely degenerate two-sample $U$-statistics that satisfy Assumption \ref{a:deg_U-stat}. 
Although such a CLT result is not new as it was already shown in \cite{khashimov1989limit}, the proof presented here is new and relies on the martingale CLT of \cite{heyde1970departure}. 
In contrast, \cite{khashimov1989limit} -- see also \cite{khashimov1989limit2} -- relies on the orthonormal representation as well as approximating the characteristic function of $W_{NT}$ by the characteristic function of $ \sumi\sumt\sum_{j=1}^\infty\lambda_{NTj}\xi_{ij}\zeta_{tj}$ for some i.i.d. standard normal random variables  $\xi_{ij}$ and $\zeta_{tj}$'s. 
Our alternative proof strategy using martingale theory enables us to generalize the result to show Theorem \ref{a:two-way}. 
The proof strategy here is related to \cite{hall1984central} and \cite{de1987central}, but the construction of martingale difference sequences is different. 
\qed
\end{remark}

\begin{proof}[Proof of Lemma \ref{thm:CLT_U-stat}]
For any set of random variables $A$, let $\sigma(A)$ denote the smallest $\sigma$-ring generated by $X$.
Assume $N\le T$ without loss of generality, and define\small
	\begin{align*}
	U_1=&\sigma_{W,NT}^{-1}w_{NT}(\alpha_1,\gamma_1), \quad \calF_0=\sigma(\{\emptyset\}),\\
	U_2=&\sigma_{W,NT}^{-1}\left(w_{NT}(\alpha_2,\gamma_1) + w_{NT}(\alpha_2,\gamma_2) + w_{NT}(\alpha_1,\gamma_2)\right), \quad \calF_1=\sigma( \alpha_1, \gamma_1),\\
	U_3=&\sigma_{W,NT}^{-1}\left(w_{NT}(\alpha_3,\gamma_1)  + w_{NT}(\alpha_3,\gamma_2)+w_{NT}(\alpha_3,\gamma_3) +w_{NT}(\alpha_2,\gamma_3) +w_{NT}(\alpha_1,\gamma_3)\right), \quad \calF_2=\sigma( \alpha_1, \alpha_2, \gamma_1, \gamma_2),\\
	\vdots\\
	U_N=&\sigma_{W,NT}^{-1}\left(\sum_{1\le t<N } w_{NT}(\alpha_N,\gamma_t) +  w_{NT}(\alpha_N,\gamma_N) + \sum_{1\le i <N } w_{NT}(\alpha_i,\gamma_N)\right),\quad
	\calF_{N-1}=\sigma( (\alpha_i)_{i=1}^{N-1}, (\gamma_t)_{t=1}^{N-1}),\\
	U_{N+1}=&\sigma_{W,NT}^{-1}\left(\sum_{1\le i\le N } w_{NT}(\alpha_i,\gamma_{N+1})\right),
	\quad \calF_N=\sigma((\alpha_i)_{i=1}^{N}, (\gamma_t)_{t=1}^{N}),\\
	\vdots\\
	U_{T}=&\sigma_{W,NT}^{-1}\left(\sum_{1\le i\le N } w_{NT}(\alpha_i,\gamma_{T})\right),\quad \calF_{T-1}=\sigma( (\alpha_i)_{i=1}^{N}, (\gamma_t)_{t=1}^{T-1}).
	\end{align*}\normalsize
Note that $\sum_{s=1}^T U_s=\sum_{i=1}^N \sum_{t=1}^T w_{NT}(\alpha_i,\gamma_t)=W_{NT}$.
Also, since $E[w_{NT}(\alpha_i,\gamma_t)]=E[w_{NT}(\alpha_i,\gamma_t)|\alpha_i]=E[w_{NT}(\alpha_i,\gamma_t)|\gamma_t]=0$, we have $E[U_s|\calF_{s-1}]=0$ for all $s=1,...,T$, i.e. the above forms a martingale difference sequence.\footnote{This construction seems to be new.}	
	
To proceed, we verify the following two conditions
\begin{align*}
	&\sum_{s=1}^T E[|U_s|^{4}]=o(1) \text{ as } T\to\infty,  &&\text{(Condition I)}\\
	&Var\left(\sum_{s=1}^T E[U_s^2|\calF_{s-1}]\right)=o(1) \text{ as } T\to \infty. &&\text{(Condition II)}
\end{align*}
Once we verify these two conditions, we can apply the martingale CLT of \cite{heyde1970departure} under their conditions (2) and (4). 

	\bigskip
	\noindent{\textbf{Verifying Condition I.}}
	For any $1<s\le N <T$, we can write
	\begin{align}
	E[U_s^4]=&E\Bigg[\sigma_{W,NT}^{-4}\Bigg(\sum_{1\le t<s } w_{st} + \sum_{1\le i \le s } w_{is}\Bigg)^4\Bigg]\nonumber\\
	=&
	\sigma_{W,NT}^{-4}\Bigg(\sum_{1\le t<s }\sum_{1\le t'<s }\sum_{1\le t''<s }\sum_{1\le t'''<s } E[w_{st} w_{st'}  w_{st''}  w_{st'''}] \nonumber\\
	&+ 
	4\sum_{1\le t<s }\sum_{1\le t'<s }\sum_{1\le t''<s }\sum_{1\le i \le s } E[w_{st}  w_{st'}  w_{st''} w_{is} ]\nonumber\\
	&+6 \sum_{1\le t<s }\sum_{1\le t'<s }\sum_{1\le i \le s }\sum_{1\le i' \le s } E[w_{st} w_{st'} w_{is} w_{i's}] \nonumber\\
	&+4\sum_{1\le t<s }\sum_{1\le i \le s }\sum_{1\le i' \le s }\sum_{1\le i'' \le s }E[w_{st} w_{is} w_{i's}w_{i''s}]\nonumber\\
	&+\sum_{1\le i \le s }\sum_{1\le i' \le s }\sum_{1\le i'' \le s }\sum_{1\le i''' \le s }E[w_{is}w_{i's}w_{i''s}w_{i'''s}]
	\Bigg).\label{eq:cond-I_fourth-moment-1}
	\end{align}
	For the first term on the RHS of \eqref{eq:cond-I_fourth-moment-1}, note by Lemma \ref{lem:nice} that $E[w_{st} w_{st'}  w_{st''}  w_{st'''}]\ne 0$ only if either the four $t$ indices are all the same or if they consist of two two-pairs. The second term on the RHS of \eqref{eq:cond-I_fourth-moment-1} is zero since $s$ in the second index appears only once in each of the  $E[w_{st}w_{st'}w_{st''}w_{is}]$ term.  The fourth term on the RHS of \eqref{eq:cond-I_fourth-moment-1} is zero since for each term in the sum, $t$ with $t<s$ appears only once.
	For the third term on the RHS of \eqref{eq:cond-I_fourth-moment-1}, we decompose\small
	\begin{align*}
	&\sum_{1\le t<s }\sum_{1\le t'<s }\sum_{1\le i \le s }\sum_{1\le i' \le s } E[w_{st} w_{st'} w_{is} w_{i's}]\\
	=&
	\sum_{1\le t<s }\sum_{1\le t'<s }\sum_{1\le i < s }\sum_{1\le i' < s } E[w_{st} w_{st'} w_{is} w_{i's}]
	+2\sum_{1\le t<s }\sum_{1\le t'<s }\sum_{1\le i < s } E[w_{st} w_{st'} w_{is} w_{ss}]
	+\sum_{1\le t<s }\sum_{1\le t'<s }E[w_{st} w_{st'} w_{ss}^2 ]\\
	=&
	\sum_{1\le t<s}\sum_{1\le i<s} E[w_{st}^2w_{is}^2]
	+0
	+\sum_{1\le t <s} E[w_{st}^2 w_{ss}^2]
	=\sum_{1\le t<s}\sum_{1\le i\le s} E[w_{st}^2w_{is}^2],
	\end{align*}\normalsize
	where the second term on the right of the first equality is zero by Lemma \ref{lem:nice} and the observation that each $i$ appears only once.
	
	Thus, (\ref{eq:cond-I_fourth-moment-1}) simplifies into
	\begin{align*}
	E[U_s^4]=&\sigma_{W,NT}^{-4}\Bigg(
	\sum_{1\le t< s} E[ w_{st}^4]
	+ \sum_{1\le t <s}\sum_{1 \le t' <s, t'\ne t} E[ w_{st}^2w_{st'}^2]
	+0\\
	&+6\sum_{1\le t<s}\sum_{1\le i\le s} E[w_{st}^2w_{is}^2]
	+0
	+\sum_{1\le i\le s} E[w_{is}^4] + \sum_{1\le i \le s }\sum_{ 1\le i'\le s, i' \ne i} E[w_{is}^2w_{i's}^2]
	\Bigg).
	\end{align*}
	Now, for $N<s\le T$, we have
	\begin{align*}
	U_s=\sigma_{W,NT}^{-1} \sum_{1\le i \le N} w_{i s},
	\end{align*}
	and thus
	\begin{align*}
	E[U_s^4]=&\sigma_{W,NT}^{-4} \sum_{1\le i \le N}  \sum_{1\le i' \le N} \sum_{1\le i'' \le N}  \sum_{1\le i''' \le N}E[w_{i s}w_{i' s}w_{i'' s}w_{i''' s}]
	\\
	=&\sigma_{W,NT}^{-4}\left( \sum_{1\le i \le N} E[w_{i s}^4]+\sum_{1\le i \le N}\sum_{1\le i' \le N}E[w_{i s}^2 w_{i' s}^2]\right).
	\end{align*}
	Therefore, we have
	\begin{align*}
	\sum_{s=1}^T E[U_s^4]
	=&\sigma_{W,NT}^{-4}\sum_{s=1}^N\Bigg(
	\sum_{1\le t< s} E[ w_{st}^4]
	+ \sum_{1\le t <s}\sum_{1\le t'<s, t'\ne t} E[ w_{st}^2w_{st'}^2]\\
	&\quad+6\sum_{1\le t<s}\sum_{1\le i\le s} E[w_{st}^2w_{is}^2]
	+\sum_{1\le i\le s} E[w_{is}^4] + \sum_{1\le i \le s }\sum_{ 1\le i'\le s, i' \ne i} E[w_{is}^2w_{i's}^2]
	\Bigg)\\
	&+\sigma_{W,NT}^{-4}\sum_{s=N+1}^T\left( \sum_{1\le i \le N} E[w_{i s}^4]+\sum_{1\le i \le N}\sum_{1\le i' \le N}E[w_{i s}^2 w_{i' s}^2]\right).
	\end{align*}
	By the identical distribution, the above expression further simplifies to
	\begin{align*}
	\sum_{s=1}^T E[U_s^4]
	=&\sigma_{W,NT}^{-4}\sum_{s=1}^N\Bigg(
	\sum_{1\le t\le s} E[ w_{11}^4]
	+ \sum_{1\le t <s}\sum_{t'\ne t}^s E[ w_{11}^2w_{12}^2]\\
	&\quad+6\sum_{1\le t<s}\sum_{1\le i\le s} E[w_{11}^2w_{22}^2]
	+\sum_{1\le i\le s} E[w_{11}^4] + \sum_{1\le i \le s }\sum_{ i' \ne i}^s E[w_{11}^2w_{21}^2]
	\Bigg)\\
	&+\sigma_{W,NT}^{-4}\sum_{s=N+1}^T\left( \sum_{1\le i \le N} E[w_{11}^4]+\sum_{1\le i \le N}\sum_{1\le i' \le N}E[w_{11}^2 w_{21}^2]\right)\\
	=&\sigma_{W,NT}^{-4}N(1+o(1))\Bigg(
	\frac{N}{2} E[ w_{11}^4]
	+  \frac{N^2}{4} E[ w_{11}^2w_{12}^2]\\
	&\quad+\frac{6N^2}{4} E[w_{11}^2w_{22}^2]
	+\frac{N}{2}E[w_{11}^4] + \frac{N^2}{4}E[w_{11}^2w_{21}^2]
	\Bigg)\\
	&+\sigma_{W,NT}^{-4}(T-N)\left( N E[w_{11}^4]+N^2E[w_{11}^2 w_{21}^2]\right)
	\lesssim \sigma_{W,NT}^{-4}N^3 E[w_{11}^4]
	\end{align*}
	when $N\sim T$.
	Now, recall $E[w_{it}]=0$ and note
	\begin{align*}
	\sigma_{W,NT}^2=&Var\left(\sumi\sumt w_{it}\right)
	=E\left[\left(\sumi\sumt w_{it}\right)^2\right]
	=\sumi\sumt E[w_{it}^2].
	\end{align*}
	Here, we once again use Lemma \ref{lem:nice} to conclude that any cross-moment is zero as it would contain either an $i$ or a $t$ that appears only once.
	Thus, we obtain
	\begin{align*}
	\sigma_{W,NT}^4=&\left(\sumi\sumt E[w_{it}^2]\right)^2=\sumi\sum_{i'=1}^N \sumt\sum_{t'=1}^T  E[w_{it}^2] E[w_{i't'}^2]= N^2T^2  \left\{E[w_{11}^2]\right\}^2
	\end{align*}
	Then, as $N\sim T$, the condition is satisfied if
	\begin{align*}
	\frac{E[w_{11}^4]}{N \{E[w_{11}^2]\}^2}=o(1),
	\end{align*}
	which holds under Assumption \ref{a:deg_U-stat}.
	We now conclude that
$
	\sum_{s=1}^T E[|U_s|^4]=o(1)
$
	and this verifies Condition I.

	\bigskip
	\noindent{\textbf{Verifying Condition II.}}
	We are going to show 
	\begin{align*}
	Var\left(\sum_{s=1}^T E[U_s^2|\calF_{s-1}]\right)=o(1).
	\end{align*}
	Note that
	\begin{align*}
	Var\left(\sum_{s=1}^T E[U_s^2|\calF_{s-1}]\right)=&Var\left(\sum_{s=1}^N E[U_s^2|\calF_{s-1}]+\sum_{s=N+1}^T E[U_s^2|\calF_{s-1}]\right)\\
	\lesssim & Var\left(\sum_{s=1}^N E[U_s^2|\calF_{s-1}]\right)+Var\left(\sum_{s=N+1}^T E[U_s^2|\calF_{s-1}]\right)
	\end{align*}
	by the fact that $Cov(X,Y)\le \max\{Var(X),Var(Y)\}$.
	Note also that
	\begin{align*}
	E[U_s^2|\calF_{s-1}]=&\sigma_{W,NT}^{-2}E\left[\left(\sum_{1\le t<s} w_{st} +\sum_{1\le i < s}w_{is}+w_{ss}\right)^2| \calF_{s-1}\right]\\
	=&\sigma_{W,NT}^{-2}\Bigg(\sum_{1\le t<s} \sum_{1\le t'<s}E[w_{st}w_{st'}| \calF_{s-1}] +\sum_{1\le i < s}\sum_{1\le i' < s}E[w_{is}w_{i's}| \calF_{s-1}]+E[w_{ss}^2|\calF_{s-1}]\\
	&+2\sum_{1\le t<s} \sum_{1\le i < s}E[w_{st}w_{is}| \calF_{s-1}] + 2\sum_{1\le t<s} E[w_{st}w_{ss}| \calF_{s-1}]+2\sum_{1\le i < s}E[w_{is}w_{ss}| \calF_{s-1}] 
	\Bigg)\\
	=&
	\sigma_{W,NT}^{-2}\Bigg(\sum_{1\le t<s} \sum_{1\le t'<s}E[w_{st}w_{st'}| \gamma_t,\gamma_{t'}] +\sum_{1\le i < s}\sum_{1\le i' < s}E[w_{is}w_{i's}| \alpha_i,\alpha_{i'}]+E[w_{ss}^2]\\
	&+2\sum_{1\le t<s} \sum_{1\le i < s}E[w_{st}w_{is}| \alpha_i,\gamma_t] 
	+2\sum_{1\le t<s} E[w_{st}w_{ss}| \gamma_t] +2\sum_{1\le i < s}E[w_{is}w_{ss}| \alpha_i] 
	\Bigg)\\
	=&
	\sigma_{W,NT}^{-2}\Bigg(\sum_{1\le t<s} \sum_{1\le t'<s}E[w_{st}w_{st'}| \gamma_t,\gamma_{t'}] +\sum_{1\le i < s}\sum_{1\le i' < s}E[w_{is}w_{i's}| \alpha_i,\alpha_{i'}]+E[w_{ss}^2]
	\Bigg)\\
	=&
	\sigma_{W,NT}^{-2}\Bigg(\sum_{1\le t<s} E[w_{st}^2| \gamma_t]+\sum_{1\le i < s} E[w_{is}^2| \alpha_i]+\sum_{1\le t<s} \sum_{ 1\le t' < s, t'\ne t} E[w_{st}w_{st'}| \gamma_t,\gamma_{t'}]\\
	& +\sum_{1\le i < s}\sum_{1\le i' \le s, i' \ne i} E[w_{is}w_{i's}| \alpha_i,\alpha_{i'}]+E[w_{ss}^2]
	\Bigg),
	\end{align*}
	for $s\le N$, where the second to the last equality follows from the observation that
	\begin{align*}
	E[w_{st}w_{is}| \alpha_i,\gamma_t]=&E[E[w_{st}w_{is}|\alpha_i,\gamma_t,\gamma_s]| \alpha_i,\gamma_t]
	=E[w_{is}E[w_{st}|\gamma_t]| \alpha_i,\gamma_t]=0,
	\\
	E[w_{st}w_{ss}| \gamma_t]=&E[w_{st}E[w_{ss}|\alpha_s,\gamma_t]| \gamma_t]
	=E[w_{st}E[w_{ss}|\alpha_s]| \gamma_t]=0,\\
	E[w_{is}w_{ss}| \alpha_i] 
	=&E[w_{is}E[w_{ss}|\alpha_i,\gamma_s]| \alpha_i] 
	=E[w_{is}E[w_{ss}|\gamma_s]| \alpha_i]=0.
	\end{align*}
	Then, we have
	\begin{align}
	Var\left( \sum_{s=1}^N E[U_s^2|\calF_{s-1}]\right)
	=& 
	\sigma_{W,NT}^{-4}Var\Bigg(\sum_{s=1}^N\sum_{1\le t<s} E[w_{st}^2| \gamma_t]+\sum_{s=1}^N\sum_{1\le i < s} E[w_{is}^2| \alpha_i]\nonumber \\
	&+\sum_{s=1}^N\sum_{1\le t<s} \sum_{1\le t'<s, t'\ne t} E[w_{st}w_{st'}| \gamma_t,\gamma_{t'}]
	+\sum_{s=1}^N\sum_{1\le i < s}\sum_{1\le i'\le s, i' \ne i} E[w_{is}w_{i's}| \alpha_i,\alpha_{i'}]
	\Bigg)\nonumber\\
	=&
	\sigma_{W,NT}^{-4}\Bigg\{Var\Bigg(\sum_{s=1}^N\sum_{1\le t<s} E[w_{st}^2| \gamma_t]
	+\sum_{s=1}^N\sum_{1\le t<s} \sum_{1\le t'<s, t'\ne t}E[w_{st}w_{st'}| \gamma_t,\gamma_{t'}]
	\Bigg)\nonumber\\
	&+
	Var\Bigg(\sum_{s=1}^N\sum_{1\le i < s} E[w_{is}^2| \alpha_i]+\sum_{s=1}^N\sum_{1\le i < s}\sum_{1\le i'<s, i' \ne i} E[w_{is}w_{i's}| \alpha_i,\alpha_{i'}]
	\Bigg)\Bigg\}\nonumber\\
	\lesssim&
	\sigma_{W,NT}^{-4}\Bigg\{Var\Bigg(\sum_{s=1}^N\sum_{1\le t<s} E[w_{st}^2| \gamma_t]\Bigg)
	+Var\Bigg(\sum_{s=1}^N\sum_{1\le t<s} \sum_{ 1\le t'<s, t'\ne t}E[w_{st}w_{st'}| \gamma_t,\gamma_{t'}]
	\Bigg)\nonumber\\
	&+
	Var\Bigg(\sum_{s=1}^N\sum_{1\le i < s} E[w_{is}^2| \alpha_i]\Bigg)+Var\Bigg(\sum_{s=1}^N\sum_{1\le i < s}\sum_{1\le i'<s, i' \ne i} E[w_{is}w_{i's}| \alpha_i,\alpha_{i'}]
	\Bigg)\Bigg\},\label{eq:cond-II_var_1}
	\end{align}
	For the first term on the RHS in (\ref{eq:cond-II_var_1}), we have
	\begin{align*}
	Var\Bigg(\sum_{s=1}^N\sum_{1\le t<s} E[w_{st}^2| \gamma_t]\Bigg)
	=&
	\sum_{s=1}^N\sum_{s'=1}^N\sum_{1\le t<s}\sum_{1\le t'<s}Cov\Bigg( E[w_{st}^2| \gamma_t],E[w_{s't'}^2| \gamma_{t'}]\Bigg)\\
	=&
	\sum_{s=1}^N\sum_{s'=1}^N\sum_{1\le t<s}Cov\Bigg( E[w_{st}^2| \gamma_t],E[w_{s't}^2| \gamma_{t}]\Bigg)\\
	=&
	\sum_{s=1}^N\sum_{1\le t<s}Var\Bigg( E[w_{st}^2| \gamma_t]\Bigg)+
	\sum_{s=1}^N\sum_{1\le s' \le N,s'\ne s}\sum_{1\le t<s}Cov\Bigg( E[w_{st}^2| \gamma_t],E[w_{s't}^2| \gamma_{t}]\Bigg)\\
	\le&
	\sum_{s=1}^N\sum_{1\le t<s} E[w_{st}^4]
	+
	2\sum_{s=1}^N\sum_{1\le s' <s}\sum_{1\le t<s}E[w_{st}^2w_{s't}^2]
	\lesssim 
	N^3 E[w_{11}^4],
	\end{align*}
	where the first inequality follows from Jensen's inequality, the law of total Covariance and the observation that $E\left[Cov(w_{st}^2,w_{s't}^2| \gamma_{t})\right]= 0$ whenever $s\ne s'$. 
	For the second term on the RHS in (\ref{eq:cond-II_var_1}), observe that
	\begin{align}
	&Var\Bigg(\sum_{s=1}^N\sum_{1\le t<s} \sum_{ 1\le t' <s, t'\ne t}E[w_{st}w_{st'}| \gamma_t,\gamma_{t'}]
	\Bigg)\nonumber\\
	=&
	\sum_{s=1}^N \sum_{s'=1}^N\sum_{1\le t<s}\sum_{1\le t'<s'} \sum_{1\le u <s, u\ne t} \sum_{1\le u' <s', u'\ne t'}Cov\Bigg(E[w_{st}w_{su}| \gamma_t,\gamma_{u}],
	E[w_{s't'}w_{s'u'}| \gamma_{t'},\gamma_{u'}]
	\Bigg)\nonumber\\
	=&
	\sum_{s=1}^N \sum_{s'=1}^N\sum_{1\le t<s\wedge s'} \sum_{1\le u< s\wedge s', u\ne t} \sum_{ 1\le u' < s\wedge s', u'\ne u,t}Cov\Bigg(E[w_{st}w_{su}| \gamma_t,\gamma_{u}],
	E[w_{s't}w_{s'u'}| \gamma_{t},\gamma_{u'}]
	\Bigg)\nonumber\\
	&+\sum_{s=1}^N \sum_{s'=1}^N\sum_{1\le t<s\wedge s'} \sum_{1\le u< s\wedge s', u\ne t} Cov\Bigg(E[w_{st}w_{su}| \gamma_t,\gamma_{u}],
	E[w_{s't}w_{s'u}| \gamma_{t},\gamma_{u}]
	\Bigg). \label{eq:cond-II_var_2}
	\end{align}
	In the case with $t=t'$, $u\ne u'$ and $t\ne u$, we have
	\begin{align*}
	E[E[w_{st}w_{su}| \gamma_t,\gamma_{u}]]=&
	E[w_{st}w_{su}]=0
	\end{align*}
	by Lemma \ref{lem:nice} since $u$ appears only once. 
	Hence, the first term on the RHS of (\ref{eq:cond-II_var_2}) becomes
	\begin{align*}
	Cov\Bigg(E[w_{st}w_{su}| \gamma_t,\gamma_{u}],
	E[w_{s't}w_{s'u'}| \gamma_{t},\gamma_{u'}]
	\Bigg)=&
	E\Bigg[E[w_{st}w_{su}| \gamma_t,\gamma_{u}]
	E[w_{s't}w_{s'u'}| \gamma_{t},\gamma_{u'}]
	\Bigg]\\
	=&
	E\Bigg[E\Big[E[w_{st}w_{su}| \gamma_t,\gamma_{u}]
	E[w_{s't}w_{s'u'}| \gamma_{t},\gamma_{u'}]| \gamma_t,\gamma_{u}\Big]
	\Bigg]\\
	=&
	E\Bigg[E[w_{st}w_{su}| \gamma_t,\gamma_{u}]E\Big[
	E[w_{s't}w_{s'u'}| \gamma_{t},\gamma_{u'}]| \gamma_t,\gamma_{u}\Big]
	\Bigg]\\
	=&
	E\Bigg[E[w_{st}w_{su}| \gamma_t,\gamma_{u}]E\Big[
	E[w_{s't}w_{s'u'}| \gamma_{t},\gamma_{u'}]| \gamma_t\Big]
	\Bigg]\\
	=&
	E\Bigg[E[w_{st}w_{su}| \gamma_t,\gamma_{u}]
	E[w_{s't}w_{s'u'} |\gamma_t]
	\Bigg]=0
	\end{align*}
	following the observation that
	\begin{align*}
	E[w_{s't}w_{s'u'} |\gamma_t]=&E[E[w_{s't}w_{s'u'} |\alpha_{s'},\gamma_t]|\gamma_t]]
	=E[w_{s't}E[w_{s'u'} |\alpha_{s'},\gamma_t]|\gamma_t]]
	=E[w_{s't}E[w_{s'u'} |\alpha_{s'}]|\gamma_t]]=0,
	\end{align*}
	where the last equality follows from degeneracy. 
	The RHS of (\ref{eq:cond-II_var_2}) becomes
	\begin{align*}
	\sum_{s=1}^N \sum_{s'=1}^N\sum_{1\le t<s\wedge s'} \sum_{1\le u< s\wedge s', u\ne t} Cov\Bigg(E[w_{st}w_{su}| \gamma_t,\gamma_{u}],
	E[w_{s't}w_{s'u}| \gamma_{t},\gamma_{u}]
	\Bigg)
	\\
	\lesssim N^4 E\Bigg[E[w_{11}w_{12}| \gamma_1,\gamma_{2}]
	E[w_{21}w_{22}| \gamma_{1},\gamma_{2}]
	\Bigg].
	\end{align*}
	Thus, it is of a smaller asymptotic order if 
	\begin{align*}
	\frac{E\big[(E[w_{11}w_{12}| \gamma_1,\gamma_{2}]
		)^2
		\big]}{\{E[w_{11}^2]\}^2}=o(1),
	\end{align*}
	which is guaranteed by Assumption \ref{a:deg_U-stat}.
	We can similarly verify that
	\begin{align*}
	Var\Bigg(\sum_{s=1}^N\sum_{1\le i < s} E[w_{is}^2| \alpha_i]\Bigg)+Var\Bigg(\sum_{s=1}^N\sum_{1\le i < s}\sum_{i' \ne i}^s E[w_{is}w_{i's}| \alpha_i,\alpha_{i'}]
	\Bigg)
	\\
	\lesssim N^3 E[w_{11}^4]+ N^4 E\big[(E[w_{11}w_{21}|\alpha_1,\alpha_2])^2\big]
	\end{align*}
	Thus, Assumption \ref{a:deg_U-stat} implies 
	\begin{align*}
	Var\left( \sum_{s=1}^N E[U_s^2|\calF_{s-1}]\right)
	\lesssim&
	\frac{N^{-1} E[w_{11}^4]+ E\big[(E[w_{11}w_{21}|\alpha_1,\alpha_2])^2\big] \vee E\big[(E[w_{11}w_{12}|\gamma_1,\gamma_2])^2\big]}{\{E[w_{11}^2]\}^2}=o(1).
	\end{align*}
	By analogous arguments, we have
	\begin{align*}
	Var\left( \sum_{s=N+1}^T E[U_s^2|\calF_{s-1}]\right)=o(1).
	\end{align*}
	This concludes verification of Condition II and thus the proof of the lemma.
\end{proof}

\section{Proof of Theorem \ref{thm:CLT_two-way}}\label{sec:thm:CLT_two-way}

	Assume $N\le T$ without loss of generality, and define
	\begin{align*}
	U_1=&\sigma_{LW,NT}^{-1}(a_1+b_1+w_{11}) \quad \calF_0=\sigma(\{\emptyset\}),\\
	U_2=&\sigma_{LW,NT}^{-1}\left(a_2+b_2+w_{21} + w_{22} + w_{12}\right), \quad \calF_1=\sigma( \alpha_1, \gamma_1),\\
	\vdots\\
	U_N=&\sigma_{LW,NT}^{-1}\left(a_N+b_N+\sum_{1\le t<N } w_{Nt} +  w_{NN}+ \sum_{1\le i <N } w_{iN}\right),\quad
	\calF_{N-1}=\sigma( (\alpha_i)_{i=1}^{N-1}, (\gamma_t)_{t=1}^{N-1}),\\
	U_{N+1}=&\sigma_{LW,NT}^{-1}\left(b_{N+1}+\sum_{1\le i\le N } w_{i,N+1}\right),
	\quad \calF_N=\sigma((\alpha_i)_{i=1}^{N}, (\gamma_t)_{t=1}^{N}),\\
	\vdots\\
	U_{T}=&\sigma_{LW,NT}^{-1}\left(b_T+\sum_{1\le i\le N } w_{iT}\right),\quad \calF_{T-1}=\sigma( (\alpha_i)_{i=1}^{N}, (\gamma_t)_{t=1}^{T-1}).
	\end{align*}
	With these notations, note that $\sum_{s=1}^T U_s=L_{NT}+W_{NT}$ and $E[U_s|\calF_{s-1}]=0$ for all $s=1,...,T$.
	In other words, the above sequence forms a martingale difference sequence. 
	To proceed, we verify the following two conditions,
	\begin{align*}
	&\sum_{s=1}^T E[|U_s|^{4}]=o(1) \text{ as } T\to\infty,  &\text{(Condition I)}\\
	&Var\left(\sum_{s=1}^T E[U_s^2|\calF_{s-1}]\right)=o(1) \text{ as } T\to \infty. &\text{(Condition II)}
	\end{align*}
	Once these two conditions are verified, we can then apply the martingale CLT of \cite{heyde1970departure} under their conditions (2) and (4) to conclude the proof of the theorem.
	
	Denote
	\begin{align*}
	U_{1s}=&
	\begin{cases}
	\sigma_{LW,NT}^{-1}(a_s+b_s), \qquad s\le N,\\
	\sigma_{LW,NT}^{-1} b_s,\qquad  N<s\le T,
	\end{cases}\\
	U_{2s}=&
	\begin{cases}
	\sigma_{LW,NT}^{-1}\left(\sum_{1\le i<s } w_{is}+\sum_{1\le t\le s } w_{si}\right), \qquad s\le N,\\
	\sigma_{LW,NT}^{-1} \sum_{N+1\le t\le s } w_{st},\qquad  N<s\le T.
	\end{cases}
	\end{align*}
	Observe that some calculation yields 
	\begin{align}
	\sigma_{LW,NT}^4=N^2(E[a_1^2])^2 + T^2 (E[b_1^2])^2 + N^2T^2 (E[w_{11}^2])^2.\label{eq:sigma_LW_NT^4}
	\end{align}

	Note that the results for the case in which $Var(L_{NT})$ dominates follows straightforwardly from an application of Lyapunov's CLT as $L_{NT}$ is a sum of independent random variables, and the results for the case in which $Var(W_{NT})$ dominates is a direct implication of Lemma \ref{thm:CLT_U-stat}.	
	Therefore, we only show the results for the case in which
	\begin{align*}
	&\lim_{N\to \infty}\frac{Var(L_{NT})}{Var(W_{NT})}\in (0,\infty).
	\end{align*}

	\noindent \textbf{Verifying Condition I.}
	Note that $\sum_{s=1}^TE[|U_s|^4]\lesssim \sum_{s=1}^T (E[U_{1s}^4 ]+E[ U_{2s}^4])$. 
	$\sum_{s=1}^T E[ U_{2s}^4]=o(1)$ follows from Assumption \ref{a:two-way} and (\ref{eq:sigma_LW_NT^4}).
	$ \sum_{s=1}^T E[ U_{2s}^4]  =o(1)$ follows from the same calculations as in the verification of Condition I in the proof of Lemma \ref{thm:CLT_U-stat}.
	
	\bigskip
	
	\noindent \textbf{Verifying  Condition II.} Note that
	\begin{align*}
	E[U_s^2|\calF_{s-1}]=A_{11s}+2A_{12s} + A_{22s},
	\end{align*}
	where $A_{kls}:=E[U_{ks}U_{ls}|\calF_{s-1}]$. 
	Hence
	\begin{align*}
	Var\left(\sum_{s=1}^T E[U_s^2|\calF_{s-1}]\right)
	=&
	Var\left(\sum_{s=1}^T\left(A_{11s} +2 A_{12s} + A_{22s} \right)\right)\\
	\lesssim&
	Var\left(\sum_{s=1}^T A_{11s}\right) +Var\left(\sum_{s=1}^T A_{12s}\right) + Var\left(\sum_{s=1}^TA_{22s} \right)\\
	=&
	\underbrace{Var\left(\sum_{s=1}^T E[U_{1s}^2|\calF_{s-1}]\right)}_{=:(I)}
	+
	\underbrace{Var\left(\sum_{s=1}^T E[U_{1s}U_{2s}|\calF_{s-1}]\right) }_{=:(II)}
	+ 
	\underbrace{Var\left(\sum_{s=1}^T E[U_{2s}^2|\calF_{s-1}]\right)}_{=:(III)}
	\end{align*}
	Under Assumption \ref{a:deg_U-stat}, 
	$(III)=o(1)$ follows from (\ref{eq:sigma_LW_NT^4}) and the same calculations as in the verification of Condition II in the proof of Lemma \ref{thm:CLT_U-stat} with $U_s$ replaced by $U_{2s}$. 
	For the $(I)$ term, observe that
	\begin{align*}
	(I)=\sigma_{LW,NT}^{-4}E\left[\left(\sum_{s=1}^N a_s^2+\sum_{s=1}^T b_s^2\right)^2 \right]
	=\sigma_{LW,NT}^{-4} \left(N E[a_1^4] + TE[b_1^4]\right)=o(1),
	\end{align*}
	where the last equality is guaranteed under Assumption \ref{a:two-way} and (\ref{eq:sigma_LW_NT^4}).
	
	For the $(II)$ term, we have
	\begin{align*}
	(II)=&Var\left(\sum_{s=1}^T E[U_{1s}U_{2s}|\calF_{s-1}]\right) \\
	=&
	\sigma_{LW,NT}^{-4}Var\left(\sum_{s=1}^N E\left[(a_s+b_s)\left(\sum_{1\le i<s } w_{is}+\sum_{1\le t\le s } w_{si}\right)|\calF_{s-1}\right]+
	\sum_{s=N+1}^TE\left[b_s\left(\sum_{1\le t\le s }  w_{st}\right) |\calF_{s-1}\right]\right)\\
	\lesssim&
	\sigma_{LW,NT}^{-4}\left\{Var\left(\sum_{s=1}^N E\left[(a_s+b_s)\left(\sum_{1\le i<s } w_{is} 
	+
	\sum_{1\le t\le s } w_{si}\right)|\calF_{s-1}\right]\right)
	\right.\\
	&\qquad\qquad \left. 
	+Var\left(
	\sum_{s=N+1}^TE\left[b_s\left(\sum_{1\le t\le s }  w_{st}\right) |\calF_{s-1}\right]\right)\right\}.
	\end{align*}
	For the first term in the last expression, observe that
	\begin{align*}
	&Var\left(\sum_{s=1}^N E\left[(a_s+b_s)\left(\sum_{1\le i<s } w_{is}+\sum_{1\le t\le s } w_{si}\right)|\calF_{s-1}\right]\right)\\
	\lesssim&
	Var\left(\sum_{s=1}^N \sum_{1\le i<s }E\left[ a_iw_{is}|\calF_{s-1}\right]\right)+Var\left(\sum_{s=1}^N \sum_{1\le t\le s}E\left[ b_t w_{si}|\calF_{s-1}\right]\right)=o(\sigma_{LW,NT}^4),
	\end{align*}
	follows from Assumption 2 and (\ref{eq:sigma_LW_NT^4}), since the condition
	\begin{align*}
	\frac{N^3E\left[(a_1  w_{12})^2\right] + T^3E\left[(b_1 w_{21})^2\right]}{\sigma_{LW,NT}^4}=o(1) 
	\end{align*} 
	implies
	\begin{align*}
	E\left[\sum_{s=1}^N  \sum_{i=1}^sE[a_s  w_{si}|\calF_{s-1}]\right]^2 + E\left[\sum_{s=1}^T \sum_{i=1}^s E[b_s w_{is}|\calF_{s-1}]\right]^2=o(\sigma_{LW,NT}^4) .
	\end{align*}
	A similar argument shows
	\begin{align*}
	Var\left(
	\sum_{s=N+1}^TE\left[b_s\left(\sum_{1\le t\le s }  w_{st}\right) |\calF_{s-1}\right]\right)=o(\sigma_{LW,NT}^4).
	\end{align*}
	This verifies Condition II and thus concludes the proof. 
\qed

\bibliography{biblio}
\end{document}